\documentclass[a4paper, openany, 12pt]{article}
\bibliographystyle{plain}

\usepackage[T1]{fontenc}
\usepackage[utf8]{inputenc}

\usepackage{tikz}
\usetikzlibrary{trees, positioning, arrows.meta, decorations.pathreplacing, calc, graphs, graphs.standard}

\tikzset{
  treenode/.style = {
    circle,
    draw,
    rounded corners=4pt,
    align=center,
    minimum width=0.2cm,
    minimum height=.2cm,
    font=\sffamily,
    fill=blue!10
  },
  edge style/.style = {
    -{Stealth[length=2.5mm,width=1.8mm]},
    thick
  }
}

\usepackage{comment}

\usepackage{amsmath,amsfonts,amssymb,amsthm,mathtools}
\usepackage{icomma}
\usepackage{dsfont}
\usepackage{stmaryrd}
\usepackage{euscript}
\usepackage{mathrsfs}

\usepackage{xcolor}
\usepackage{graphicx}
\graphicspath{{images/}{images2/}}
\setlength\fboxsep{3pt}
\setlength\fboxrule{1pt}
\usepackage{wrapfig}
\usepackage{caption}
\captionsetup{justification=centering}
\usepackage{float}

\usepackage[left=3cm,right=1.5cm,top=2cm,bottom=2cm,bindingoffset=0cm]{geometry}

\usepackage{enumitem}

\usepackage{fancyhdr}
\pagestyle{fancy}
\setlength{\headheight}{28pt}

\makeatletter
\renewcommand\Huge{\@setfontsize\Huge{14pt}{10}}
\renewcommand\huge{\@setfontsize\huge{14pt}{10}}
\renewcommand\Large{\@setfontsize\Large{14pt}{10}}
\renewcommand\large{\@setfontsize\large{12pt}{10}}
\makeatother

\setlength{\parindent}{2em}
\linespread{1}

\usepackage{gensymb}
\usepackage{calc}
\usepackage{indentfirst}

\usepackage[unicode, pdftex]{hyperref}

\newtheorem{lemma}{Lemma}
\newtheorem{theorem}{Theorem}
\newtheorem{prop}{Proposition}
\theoremstyle{remark}

\theoremstyle{definition}
\newtheorem{definition}{Definition}[section]

\newcommand{\beq}{\begin{equation}}
\newcommand{\eeq}{\end{equation}}
\newcommand{\beqnn}{\begin{equation*}}
\newcommand{\eeqnn}{\end{equation*}}


\newcommand{\Tr}{\operatorname{Tr}}
\newcommand{\diag}{\operatorname{diag}}


\newcommand{\sut}{\mathbf{SU}(2)}

\newcommand{\un}{\mathbf{U}(2^n)}



\newcommand{\ot}{\otimes}



\begin{document}

\title{\vspace{-0.0cm}\bf
Mathematical aspects of the decomposition of diagonal $\mathbf{U}(N)$ operators}

\author{
M. M. Fedin$^{a,c,}$\footnote{fedin.mm@iitp.ru, fedin.mm@phystech.edu},
A. A. Morozov$^{a,b,c,}$\footnote{morozov.andrey.a@iitp.ru}
}

\date{ }

\maketitle

\vspace{-7.5cm}

\begin{flushleft}
    ITEP/TH-29/25\\
    IITP/TH-22/25\\
    MIPT/TH-18/25
\end{flushleft}

\vspace{3.5cm}

\begin{center}
$^a$ {\small {\it Institute for Information Transmission Problems, Moscow 127994, Russia}}\\
$^b$ {\small {\it NRC ``Kurchatov Institute", 123182, Moscow, Russia}}\\
$^c$ {\small {\it MIPT, Dolgoprudny, 141701, Russia}}
\end{center}

\date{ }

\maketitle
    
\begin{abstract}

    \begin{center}
    
    We prove the decomposition of arbitrary diagonal operators into tensor and matrix products of smaller matrices, focusing on the analytic structure of the resulting formulas and their inherent symmetries.  Diagrammatic representations are introduced, providing clear visualizations of the structure of these decompositions. We also discuss symmetries of the suggested decomposition. Methods and representations developed in this paper can be applied in different areas, including optimization of quantum computing algorithms, complex biological analysis, crystallography, optimization of AI models, and others.
    \end{center}

\end{abstract}

    \tableofcontents{}
    \newpage

    \section{Introduction \label{sec:phys}}

Tensor decompositions are widely employed across various fields of modern natural sciences for the analysis of multidimensional data. They enable the compression of large AI models \cite{Liu_2023}, facilitate the classification of quantum entanglement states \cite{doi:10.1137/24M1643451}, and support the analysis of complex multilevel biological networks \cite{doi:10.1137/23M1592547} and some highly specialized problems of crystallography \cite{aroyo2006bilbao}. Recent research also explores the use of quantum computing to develop novel algorithms for mathematical operations via tensor-based approaches \cite{Uotila_2024} or develop novel quantum architectures, such as Topological Quantum Computer (TQC) \cite{KITAEV20032},\cite{KOLGANOV2023116072},\cite{Kolganov2020} based on the knot theory \cite{BISHLER2023104729}. In certain scenarios, approximate decompositions can be performed to optimize computational complexity \cite{Schollw_ck_2011}, including through the use of tensor networks \cite{TYRTYSHNIKOV2004423}. Numerical methods are often used for practical purposes \cite{doi:10.1137/090752286}. Despite their utility, these methods encounter fundamental challenges, as finding an optimal decomposition is an NP-hard problem \cite{turchetti2023decompositionlineartensortransformations}.  For many specific applications, it is therefore advantageous to possess a general solution and to comprehend its underlying symmetries.

In this article, we will consider a recurrent approach that allows us to obtain an analytical answer for the decomposition of a diagonal matrix, as well as consider its symmetries. We will decompose the diagonal matrix from $\mathbf{U}(2^n)$ into the product of the elements of the groups $\mathbf{SU}(4)$, $\mathbf{SU}(2)$ and $\mathbf{U}(1)$.
Also, if necessary, we can use the matrices $\mathbf{U}(2)$, since they can be easily expressed as the product of the elements of the groups $\mathbf{SU}(2)$ and $\mathbf{U}(1)$. The same applies to $\mathbf{U}(4)$. In our method, the number of elements $\mathbf{SU}(4)$ is minimal for constructing a diagram in a recurrent way. For some cases, it is possible to find an algorithmically optimized solution with fewer operators $\mathbf{SU}(4)$, but there is currently no general solution. Algorithmic optimization and some of its variants are discussed in more details in the context of quantum computing in \cite{huang2024optimizedsynthesiscircuitsdiagonal}.

The main motivation for this particular decomposition basis lies in quantum computing, which uses its special case, CNOT \cite{crooks2024},\cite{nch}, instead of the $L(k)$ operator. The use of the operators $\mathbf{SU}(2)$ is motivated by the fact that, technically speaking, many quantum computer architectures allow us to perform any one-qubit operation, therefore, in addition to $CNOT$, any $\mathbf{SU}(2)$ matrices can be included in our basic operations. This approach allows you to express any operators accurately.

It is worth paying special attention to the different accuracy of one- and two-qubit operations on a real quantum computer, created in the presence of various physical limitations, the nature of which is not important to us here, however, the following is important: according to a study by \cite{expcnotsign}, typical indicators of the accuracy of executing unitary operators $\mathbf{SU}(2)$ are $99.7\%$, and $\mathbf{SU}(4)$ - $96,5\%$, this means that the error probability differs by about an order of magnitude, which requires us to minimize the number of operators from $\mathbf{SU}(4)$ in our decomposition. From a mathematical point of view, the accuracy of execution is conveniently interpreted as the probability of non-destruction of the quantum state (entanglement of qubits) after applying the operator.

Experts who are more familiar with quantum computing may also notice that the decomposition into universal bases, such as the set of unitary operators $\{H,T, CNOT\}$, where $H =\frac{1}{\sqrt{2}}\begin{bmatrix}
    1&-1\\1&1
\end{bmatrix}$, a $T =\begin{bmatrix}
    1 &0\\0&e^{i\frac{\pi}{4}}
\end{bmatrix}$, assumes the portability of quantum algorithms between different hardware platforms.    the group of unitary matrices is continuous, and the group generated by a finite number of operators is no more than countable. 

This is especially important in the context of the existence of various architectures of quantum processors. In other words, if we know how to simulate the operators $\{H,T, CNOT\}$ on a given quantum processor, then we know how to execute any quantum algorithm on this quantum processor with a given accuracy, according to the Solovay-Kitaev theorem \cite{kitaev1997}, \cite{kitaev2002}. In our work, decompositions with a given accuracy are not considered.

Moreover, it is clear that if a decomposition exists for a matrix from $\mathbf{U}(n)$ for any $n \in \mathbb{N}$, then for any given $N$, one can find an $n$ such that the decomposition for a matrix in $\mathbf{U}(N)$ is obtained as a trivial reduction of the former. In Appendix \ref{sec:suNtosu2n}, we carefully prove
 that the ability to decompose matrices of size $2^n$ implies the ability to decompose matrices of any size $N < 2^n$.

\section{Results and definitions}

In this paper we prove the following theorem:

\begin{theorem}[About reccurent decomposition\label{th:reccurent}]
It is always possible to decompose diagonal $D_n$ matrix using reccurent formula:
\begin{multline}
        D_{n}(\alpha_1, \alpha_2, \dots, \alpha_{2^n})=  \Bigl(D_{n-1}(\bar\alpha_1, \bar\alpha_2, \dots, \bar\alpha_{2^{n-1}}) \ot I\Bigr) \cdot U_{tail} =\\= \Bigl(D_{n-1}(\bar\alpha_1, \bar\alpha_2, \dots, \bar\alpha_{2^{n-1}}) \ot I\Bigr) \cdot \prod_{i=1}^{2^{n-1}} \biggl( \Bigl(I_{2^{n-1}} \ot D_1(\beta_i, -\beta_i)\Bigr) \cdot L(A_n(i)) \biggr)
    \label{eq:certralreccurentformula}
\end{multline}

With linear bijection between parameters:
\begin{align}
    \mathcal{L}: (\bar\alpha_1, \bar\alpha_2, \dots, \bar\alpha_{2^{n-1}},\beta_1,\beta_2,\dots,\beta_{2^{n-1}}) \rightarrow (\alpha_1, \alpha_2, \dots, \alpha_{2^n}) \\
    \mathcal{L}^{-1}: (\bar\alpha_1, \bar\alpha_2, \dots, \bar\alpha_{2^{n-1}},\beta_1,\beta_2,\dots,\beta_{2^{n-1}}) \leftarrow (\alpha_1, \alpha_2, \dots, \alpha_{2^n})
\end{align}
We will demonstrate which types of $A_n$ are suitable, the exact definition of which will be given later in this section.
Also all of the quantities: $L(k),\ A_n(i),\ D_n$ will be defined below in this section.
\end{theorem}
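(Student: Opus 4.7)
The plan is to expand the right-hand side of \eqref{eq:certralreccurentformula} explicitly, reduce it to a single diagonal matrix, and then match diagonal entries with those of $D_n$, obtaining a linear system whose invertibility establishes the bijection $\calL$. The first observation I would use is that each $L(A_n(i))$, once its definition is introduced, will act as a permutation matrix on the computational basis, so every factor $\bigl(I_{2^{n-1}}\otimes D_1(\beta_i,-\beta_i)\bigr)\cdot L(A_n(i))$ is of the form (diagonal)$\,\cdot\,$(permutation). This means the standard commutation rule $P\cdot D=D^{P}\cdot P$, where $D^{P}$ denotes the diagonal matrix with entries permuted by $P$, applies throughout.

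Starting from this, I would push every permutation factor to the far right of $U_{tail}$, accumulating the permuted diagonals on the left. For $U_{tail}$ to be diagonal (a necessary condition for the whole product to equal the diagonal $D_n$ after multiplying by $D_{n-1}\ot I$), the total permutation $\prod_{i=1}^{2^{n-1}} L(A_n(i))$ must equal the identity. This yields the first structural constraint on admissible sequences $A_n(i)$. Assuming this constraint, the tail collapses to a single diagonal matrix whose $j$th entry is a signed sum $\sum_{i} \varepsilon_{ji}\,\beta_i$, with $\varepsilon_{ji}\in\{+1,-1\}$ determined by how the partial products of permutations route the action of $I_{2^{n-1}}\ot D_1(\beta_i,-\beta_i)$ onto basis state $|j\rangle$. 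Adding the contribution of $D_{n-1}(\bar\alpha_1,\dots,\bar\alpha_{2^{n-1}})\ot I$, which contributes $\bar\alpha_{\lceil j/2\rceil}$ to entry $j$, I obtain the explicit linear map
\begin{equation*}
    \alpha_j \;=\; \bar\alpha_{\lceil j/2\rceil} \;+\; \sum_{i=1}^{2^{n-1}} \varepsilon_{ji}\,\beta_i, \qquad j=1,\dots,2^n.
\end{equation*}

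This is a linear system of $2^n$ equations in $2^n$ unknowns $(\bar\alpha_1,\dots,\bar\alpha_{2^{n-1}},\beta_1,\dots,\beta_{2^{n-1}})$, so showing bijectivity reduces to verifying non-degeneracy of the $2^n\times 2^n$ coefficient matrix. For the admissibility of a particular $A_n$, I would reformulate this as follows: the matrix splits into a block of $2^{n-1}$ columns coming from $\bar\alpha$ (each $\bar\alpha_k$ appearing in exactly two entries $j=2k-1,2k$) and a block of $2^{n-1}$ columns of signs $\varepsilon_{ji}$ coming from $\beta$. A convenient criterion is that, taking pairwise differences $\alpha_{2k}-\alpha_{2k-1}$ to eliminate $\bar\alpha_k$, one obtains a $2^{n-1}\times 2^{n-1}$ system in the $\beta_i$ with $\pm 2$ entries whose determinant must be non-zero. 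The suitability of a given choice of $A_n$ is exactly the condition that this reduced sign matrix is invertible.

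The main obstacle will be the last step: proving that the specific choices of $A_n(i)$ announced after the theorem do produce an invertible reduced sign matrix. Depending on how $A_n$ is defined, I expect a Gray-code/binary-reflected structure in which the successive permutations toggle a single target qubit, so that the sign pattern $\varepsilon_{ji}$ mimics a Hadamard-like or triangular matrix whose determinant is a nonzero power of $2$. Induction on $n$ is the natural framework: once the inductive hypothesis handles $D_{n-1}$ acting on the first $n-1$ qubits, only the structure of the $\beta$-block for the newly added qubit must be analyzed. I would present this as the technical core of the proof, with the commutation and permutation bookkeeping serving as preliminaries that separate the algebraic from the combinatorial content of the statement.
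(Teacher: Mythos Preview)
Your plan is correct and matches the paper's approach: the paper likewise shows $U_{tail}$ must be diagonal, solves $\bar\alpha_i = (\alpha_{2i-1}+\alpha_{2i})/2$, and reduces bijectivity to invertibility of the $2^{n-1}\times 2^{n-1}$ sign matrix $r_n$ (your $\varepsilon$), which it establishes by identifying $r_n$ as a column permutation of the Hadamard-type matrix $r_2^{\otimes(n-1)}$ for the specific binary-tree choice of $A_n$. One minor remark: your permutation-pushing implicitly fixes $X$ to be the Pauli-$X$ matrix, whereas the paper works with the representative-free identity $X D_1(\varphi) X = D_1(-\varphi)$ valid for any admissible $X$, and its invertibility argument is a direct combinatorial identification rather than induction---but neither difference is substantive.
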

Here, $I$ denotes the identity matrix in $\mathbf{SU}(2)$, and $I_N$ is the identity matrix in $\mathbf{SU}(N)$.

As we can see below, this is the best recursive theoretical decomposition with the same number of $L(k)$ operators as the scheme described in \cite{1629135} and \cite{crooks2024}. However, our scheme was obtained in a different way and is a generalization of the methods described in these papers.

$\mathcal{L}$ and $\mathcal{L}^{-1}$ will be constructed in Section \ref{sec:fullproof}.

\begin{definition}[Matrix $D_n$]
\label{def:D_n}
$D_n(\alpha_1, \alpha_2, \dots, \alpha_{2^n}) \in \mathbf{U}(2^n)$ defined by the formula:
\begin{equation}
D_{n}(\alpha_1, \alpha_2, \dots, \alpha_{2^n}) = \diag(e^{i\alpha_1},e^{i\alpha_2}, \dots, e^{i\alpha_{2^n}})
\end{equation}
Sometimes we will write $D_1(\varphi)$ using just one argument. This notation will be equivalent to:
\begin{equation}
D_1(\varphi) = D_1(\varphi,-\varphi)
\end{equation}
Also note, that $D_0(\alpha_1) = e^{i\alpha_1}$.
\end{definition}

\begin{definition}[Matrix $X$]
\label{def:X}
 $X\in\mathbf{U}(2)$ is a matrix satisfying the following properties:
\begin{equation}
X \in\mathbf{U}(2),\quad X^2 = I,\quad \Tr(IX) = 0,\quad \Tr(ZX)=0,\text{ where } Z = \begin{bmatrix}
1&0\\
0&-1
\end{bmatrix}
\end{equation}
\end{definition}

It should be noted that for numerical calculations in specific cases, any matrix $X$ permitted by the definition can be used, for example $X = \begin{bmatrix} 0&1\\1&0\end{bmatrix}$. This does not affect either the decomposition parameters or, in particular, its non-degeneracy. This follows from the fact that we will only need the sign-flipping property of a diagonal matrix via the matrix $X$, see formulas \eqref{eq:D_prop}.

\begin{definition}[Control Matrix $L(k)$]
\label{def:Lk}
The control matrix $L(k)$ is defined as the matrix obtained from the sum of two matrices: $I^{\ot (k-1)} \ot \begin{bmatrix}
1&0\\
0&0
\end{bmatrix} \ot I^{\ot (n-k)}$ and $I^{\ot (k-1)} \ot \begin{bmatrix}
0&0\\
0&1
\end{bmatrix} \ot I^{\ot (n-k-1)}\ot X$:
\begin{equation}
L(k) = I^{\ot (k-1)} \ot \begin{bmatrix}
1&0\\
0&0
\end{bmatrix} \ot I^{\ot (n-k)} +
I^{\ot (k-1)} \ot \begin{bmatrix}
0&0\\
0&1
\end{bmatrix} \ot I^{\ot (n-k-1)}\ot X
\label{eq:Lclose}
\end{equation}
\end{definition}

It is convenient here and further to use the projector notation for compactness:
\begin{align}
\pi_0 = \begin{bmatrix}
1&0\\
0&0
\end{bmatrix},&\quad \pi_1 = \begin{bmatrix}
0&0\\
0&1
\end{bmatrix}\\
\pi_0^2 = \pi_0,&\quad \pi_1^2 = \pi_1;\\
\pi_0I = \pi_0,&\quad \pi_1I = \pi_1
\end{align}

In this form, $L(k)$ can be rewritten as:
\begin{equation}
L(k) = I^{\ot (k-1)} \ot \pi_0\ot I^{\ot (n-k)} +
I^{\ot (k-1)} \ot\pi_1 \ot I^{\ot (n-k-1)}\ot X
\label{eq:l_k_throw_pi}
\end{equation}

In this representation, it becomes apparent that with appropriate renumbering of the tensor spaces, this is equivalent to:
\begin{equation}
\widetilde{L}(k) = I^{\ot (n-2)} \ot (\pi_0 \ot I + \pi_1 \ot X)
\end{equation}

Which shows that the nontrivial part of the operator $L(k)$ lies in $SU(4)$.

\begin{prop}{$\forall k,m \in {1,\dots,n-1}: [L(k),L(m)] = 0$\label{prop:L_commut}}
\end{prop}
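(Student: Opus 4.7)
The plan is a direct computation using the explicit form in \eqref{eq:l_k_throw_pi} together with the two defining algebraic properties $X^2 = I$ and $\pi_0 + \pi_1 = I$, plus the fact that operators supported on disjoint tensor factors commute.

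First I would handle the trivial case $k = m$, and then assume without loss of generality that $k < m$, both in $\{1,\dots,n-1\}$. I would introduce the shorthand $P_a^{(j)}$ for the operator on $n$ qubits that has $\pi_a$ in slot $j$ and identity everywhere else (with $a \in \{0,1\}$, $j \in \{k,m\}$), and write $X^{(n)}$ for the operator with $X$ in slot $n$ and identity elsewhere. The hypothesis $k,m \le n-1$ guarantees that slots $k$, $m$, $n$ are pairwise distinct, so that operators on any two of these slots commute. In this notation
\begin{equation*}
L(k) = P_0^{(k)} + P_1^{(k)} X^{(n)}, \qquad L(m) = P_0^{(m)} + P_1^{(m)} X^{(n)}.
\end{equation*}

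Next I would expand the product $L(k) L(m)$ using bilinearity. The cross term containing $X^{(n)} X^{(n)}$ simplifies to the identity by $X^2 = I$, giving
\begin{equation*}
L(k) L(m) = P_0^{(k)} P_0^{(m)} + P_1^{(k)} P_1^{(m)} + \bigl(P_0^{(k)} P_1^{(m)} + P_1^{(k)} P_0^{(m)}\bigr) X^{(n)}.
\end{equation*}
Because $P_a^{(k)}$ and $P_b^{(m)}$ act on different tensor factors they commute, and $X^{(n)}$ lives on a third slot so it commutes with both; swapping the order of the projector factors and collecting terms in the symmetric way yields exactly the expansion of $L(m) L(k)$.

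I expect no serious obstacle: the argument is essentially the bookkeeping of two commuting controlled-$X$ gates sharing a common target qubit, where the commutation works because the two $X$ factors multiply to the identity. The only thing to be careful about is the accounting of tensor slots, in particular the assumption $k, m \le n-1$ which is exactly what guarantees that the target slot $n$ is disjoint from the control slots; I would state this explicitly rather than leave it implicit in the indexing of Definition~\ref{def:Lk}.
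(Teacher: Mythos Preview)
Your proposal is correct and follows essentially the same approach as the paper: both expand $L(k)L(m)$ and $L(m)L(k)$ from the projector form \eqref{eq:l_k_throw_pi}, use $X^2=I$ to simplify the $\pi_1\otimes\pi_1$ term, and observe that the resulting four summands coincide because the projectors and $X$ sit on disjoint tensor slots. The only cosmetic difference is that the paper writes out the full tensor strings explicitly while you use the compact $P_a^{(j)}$, $X^{(n)}$ shorthand; your remark that $\pi_0+\pi_1=I$ is needed is harmless but in fact that identity is never used in the computation.
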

\begin{proof}

\begin{enumerate}
\item $k=m$ — trivial, since every operator commutes with itself.
\item $k < m$.
   Substituting formula \eqref{eq:l_k_throw_pi} into the commutator we obtain

\begin{multline}
    L(k) = I^{\ot (k-1)} \ot \pi_0\ot I^{\ot (m-k)}\ot I^{\ot (n-m-1)}\ot I + \\
       + I^{\ot (k-1)} \ot \pi_1\ot I^{\ot (m-k)}\ot I^{\ot (n-m-1)}\ot X 
\end{multline}
\begin{multline}
    L(m) = I^{\ot (k-1)} \ot I^{\ot (m-k)}\ot \pi_0\ot I^{\ot (n-m-1)}\ot I + \\
       + I^{\ot (k-1)} \ot I^{\ot (m-k)}\ot \pi_1\ot I^{\ot (n-m-1)}\ot X 
\end{multline}

\begin{multline}
L(k)L(m) = 
I^{\ot(k-1)} \ot \boldsymbol{\pi_0} \ot I^{\ot(m-k-1)} \ot \boldsymbol{\pi_0} \ot I^{\ot(n-m-1)} \ot \boldsymbol{I} \quad + \\
+ I^{\ot(k-1)} \ot \boldsymbol{\pi_0} \ot I^{\ot(m-k-1)} \ot \boldsymbol{\pi_1} \ot I^{\ot(n-m-1)} \ot \boldsymbol{X} \quad + \\
+ I^{\ot(k-1)} \ot \boldsymbol{\pi_1} \ot I^{\ot(m-k-1)} \ot \boldsymbol{\pi_0} \ot I^{\ot(n-m-1)} \ot \boldsymbol{X} \quad + \\
+ I^{\ot(k-1)} \ot \boldsymbol{\pi_1} \ot I^{\ot(m-k-1)} \ot \boldsymbol{\pi_1} \ot I^{\ot(n-m-1)} \ot \boldsymbol{I} \quad\\
\end{multline}

\begin{multline}
L(m)L(k) = 
I^{\ot(k-1)} \ot \boldsymbol{\pi_0} \ot I^{\ot(m-k-1)} \ot \boldsymbol{\pi_0} \ot I^{\ot(n-m-1)} \ot \boldsymbol{I} \quad + \\
+ I^{\ot(k-1)} \ot \boldsymbol{\pi_1} \ot I^{\ot(m-k-1)} \ot \boldsymbol{\pi_0} \ot I^{\ot(n-m-1)} \ot \boldsymbol{X} \quad + \\
+ I^{\ot(k-1)} \ot \boldsymbol{\pi_0} \ot I^{\ot(m-k-1)} \ot \boldsymbol{\pi_1} \ot I^{\ot(n-m-1)} \ot \boldsymbol{X} \quad + \\
+ I^{\ot(k-1)} \ot \boldsymbol{\pi_1} \ot I^{\ot(m-k-1)} \ot \boldsymbol{\pi_1} \ot I^{\ot(n-m-1)} \ot \boldsymbol{I} \quad\\
\end{multline}

Therefore:

\begin{equation}
[L(k),L(m)] = L(k)L(m) - L(m)L(k) = 0
\end{equation}

\item $k>m$. $[L(k),L(m)] = 0 \Leftrightarrow [L(m),L(k)] = 0$ and this relation completes the proof.
\end{enumerate}

\end{proof}

\begin{definition}[Sequence $A_n(i)$]
This is a sequence of $2^{n-1}$ natural numbers: $A_n = \{a_i \in \{1,2,\dots,n-1\}: \forall i \in \{1,2,\dots,2^{n-1}\}\}$
\end{definition}

We will prove that, the form of our answer is in accordance with the formula~\eqref{eq:certralreccurentformula} without searching for specific parameters in the Section~\ref{sec:Chapter5}; the derivation of the specific linear mapping will be given in Section~\ref{sec:r_formula}; symmetries of the solution and how the parametric mapping changes will be discussed in Section~\ref{sec:symmetries}.

Note that we use the notation $\vec{x}$ in both cases: when $x$ is a vector and when $x$ is a column. There is no confusion, because the context always unambiguously defines a column or a vector, or a specific interpretation does not change the meaning of the expressions.
    
    
    \section{Validity of the Decomposition Form}
\label{sec:Chapter5} \index{Chapter5}

In this section we discuss form of $U_{tail}$ operator from the formula \eqref{eq:certralreccurentformula}. We will describe some properties of this operator and its decomposition. We will describe the mapping between the parameters of this operator and the parameters in the decomposition and prepare the necessary basis for a deeper study of this mapping in the Section \ref{sec:r_formula}.

\subsection{Necessary conditions on the $U_{tail}$ \label{sec:CNOTSCHEMEOUR}}

By examining~\eqref{eq:certralreccurentformula}, we can deduce the structure that $U_{tail}$ must have.

\begin{prop}[$U_{tail}$ is diagonal]
\label{prop:U_tail_is_diag}
\end{prop}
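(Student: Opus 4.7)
The claim is essentially immediate once we rearrange the defining identity, and my plan is to make exactly that rearrangement precise. From the first equality in~\eqref{eq:certralreccurentformula}, I can isolate $U_{tail}$ as
\begin{equation*}
U_{tail} = \Bigl(D_{n-1}(\bar\alpha_1,\dots,\bar\alpha_{2^{n-1}}) \ot I\Bigr)^{-1} \cdot D_n(\alpha_1,\dots,\alpha_{2^n}),
\end{equation*}
so it suffices to exhibit the right-hand side as a product of diagonal matrices.

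My first step is to invoke Definition~\ref{def:D_n}: $D_{n-1}$ and $D_n$ are literally $\diag$'s of unit-modulus phases, hence diagonal and in particular unitary and invertible. Next I would observe that $I$ (the identity in $\mathbf{SU}(2)$, per the convention stated immediately after Theorem~\ref{th:reccurent}) is diagonal, and that the tensor product of two diagonal matrices is diagonal because $(A\ot B)_{(i,j),(k,l)} = A_{ik}B_{jl}$ vanishes unless $i=k$ and $j=l$. Consequently $D_{n-1}\ot I$ is diagonal, its inverse is the entrywise-reciprocal diagonal matrix $D_{n-1}^{-1}\ot I$, and the product of this with $D_n$ is again diagonal, since the product of diagonal matrices of the same size is diagonal with componentwise-multiplied entries.

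There is essentially no obstacle here, as every fact used (tensor product, inverse, product of diagonals) is elementary linear algebra; the only thing to be careful about is that the equality defining $U_{tail}$ is precisely the one in~\eqref{eq:certralreccurentformula} and that the sizes match, i.e.\ $D_{n-1}$ is $2^{n-1}\times 2^{n-1}$ and $I$ is $2\times 2$, so that $D_{n-1}\ot I$ and $D_n$ both lie in $\mathbf{U}(2^n)$ and can be multiplied. This observation is what then makes it meaningful to study the second representation of $U_{tail}$ in~\eqref{eq:certralreccurentformula} as a product of $L(k)$'s and one-qubit diagonals: we already know abstractly that the result must be diagonal, and the subsequent propositions of Section~\ref{sec:Chapter5} will pin down the precise entries and match them against $(D_{n-1}\ot I)^{-1}D_n$ to produce the linear bijection $\mathcal{L}$ promised in the theorem.
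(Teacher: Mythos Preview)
Your proof is correct and follows essentially the same approach as the paper: both solve~\eqref{eq:certralreccurentformula} for $U_{tail}=(D_{n-1}\ot I)^{-1}D_n$ and observe that the right-hand side is a product of diagonal matrices. The paper merely spells out the explicit diagonal entries $e^{i\alpha_k - i\bar\alpha_{\lceil k/2\rceil}}$ where you argue abstractly, but the substance is identical.
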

\begin{proof}
In the formula $D_n = (D_{n-1} \ot I)\cdot U_{tail}$, both $D_n$ and $(D_{n-1} \ot I)$ are diagonal matrices. We can then write the full form of $D_n$ by substituting definition~\ref{def:D_n}:
\begin{align}
    \diag(e^{i\alpha_1}, e^{i\alpha_2}, \dots, e^{i\alpha_{2^n}}) = \diag(e^{i\bar\alpha_1}, e^{i\bar\alpha_2}, \dots, e^{i\bar\alpha_{2^{n-1}}})\ot\diag(1,1) \cdot U_{tail} \\
    \diag(e^{i\alpha_1}, e^{i\alpha_2}, \dots, e^{i\alpha_{2^n}}) = \diag(e^{i\bar\alpha_1},e^{i\bar\alpha_1}, e^{i\bar\alpha_2},e^{i\bar\alpha_2}, \dots, e^{i\bar\alpha_{2^{n-1}}}, e^{i\bar\alpha_{2^{n-1}}}) \cdot U_{tail} \\
    \diag(e^{i\alpha_1}, e^{i\alpha_2}, \dots, e^{i\alpha_{2^n}})\diag(e^{-i\bar\alpha_1},e^{-i\bar\alpha_1}, e^{-i\bar\alpha_2},e^{-i\bar\alpha_2}, \dots, e^{-i\bar\alpha_{2^{n-1}}}, e^{-i\bar\alpha_{2^{n-1}}}) = U_{tail} \\
    \diag(e^{i\alpha_1-i\bar\alpha_1}, e^{i\alpha_2-i\bar\alpha_1}, \dots, e^{i\alpha_{2^n}-i\bar\alpha_{2^{n-1}}}) = U_{tail}
\end{align}
This final expression completes the proof.
\end{proof}

We can then represent $U_{tail}$ as:

\begin{equation}
    U_{tail} = \diag(e^{i\gamma_1}, e^{-i\gamma_1+i\gamma'_1}, \dots, e^{i\gamma_{2^{n-1}}}, e^{-i\gamma_{2^{n-1}}+i\gamma'_{2^{n-1}}})
    \label{eq:U_tail_general}
\end{equation}

This is the most general form of a diagonal matrix from $U(2^{n-1})$. If $\forall i \in \{1,2,3,\dots,2^{n-1}\}: \gamma'_i = 0$, then $U_{tail}$ is a block-diagonal matrix with $D_1(\gamma_i)\in SU(2)$ blocks on its diagonal. This observation will be useful later.

Moreover, since both matrices $U_{tail}$ and $D_{n-1}$ are diagonal, we can conveniently treat diagonal matrices as vectors. To simplify the expressions, we introduce the following notation:

\begin{equation}
    \diag(e^{i\kappa_1},\dots,e^{i\kappa_m}) \equiv (\kappa_1,\dots,\kappa_m)
    \label{notation}
\end{equation}

With this notation, multiplication becomes addition and vice versa:

\begin{multline}
    \diag(e^{i\kappa_1},\dots,e^{i\kappa_m})\cdot \diag(e^{i\mu_1},\dots,e^{i\mu_m})
    \equiv (\kappa_1,\dots,\kappa_m)\cdot(\mu_1,\dots,\mu_m) =\\
    = (\kappa_1+\mu_1,\dots,\kappa_m+\mu_m)
\end{multline}

From now on, any row in parentheses denotes a diagonal matrix constructed according to \eqref{notation}, whereas a row in square brackets denotes an ordinary row vector.

We introduce the following notations: $\vec{\bar{\alpha}}$ for $D_{n-1}$, $\vec{\gamma}$ and $\vec{\gamma'}$ for the parameters of $U_{tail}$, and $\vec{\alpha}$ for the final parameters of $D_n$. Using our notation (\ref{notation}):

\begin{multline}
 \Big((\bar{\alpha}_1,\dots,\bar{\alpha}_{2^{n-1}}) \otimes (1,1)\Big) \cdot \Big((\gamma_1,\dots,\gamma_{2^{n-1}}) \otimes (1,-1)\Big)\cdot\Big((\gamma'_1,\dots,\gamma'_{2^{n-1}}) \otimes (0,1)\Big) = \\
 = (\alpha_1, \dots, \alpha_{2^n})
\end{multline}

$U_{tail}$ can only have $2^{n-1}$ free parameters. The remaining ones must depend on them. If we set $\forall i\in\{1,2,3,\dots,2^{n-1}\}: \gamma'_i=0$, then the number of remaining parameters will be exactly $2^{n-1}$, and the matrix itself becomes special unitary.
In that case, the decomposition succeeds, and we obtain a system of equations:

\begin{equation}
[\bar{\alpha}_1,\dots,\bar{\alpha}_{2^{n-1}}]\otimes[1,1] + [\gamma_1,\dots,\gamma_{2^{n-1}}]\otimes[1,-1] = [\alpha_1,\dots,\alpha_{2^n}] \end{equation}
The solution of this system is:
\begin{equation}     \bar{\alpha}_i = \frac{\alpha_{2i-1}+\alpha_{2i}}{2} \end{equation}
\begin{equation}     \gamma_i = \frac{\alpha_{2i-1}-\alpha_{2i}}{2} \end{equation}

This solution is non-degenerate with respect to $\vec{\bar\alpha}$ and $\vec\gamma$. It is worth noting that the equality remains valid even when the parameter values go beyond the interval $[-\pi, \pi]$, despite the fact that these parameters are exponents of complex exponentials.

If necessary, one can add a term $g_i \cdot 2\pi$ where $g_i\in\mathbb{Z}$ to each resulting parameter to bring it back within the interval $[-\pi, \pi]$.

Also we can make the first step of constructing $\mathcal{L}$:

\begin{equation}
    \mathcal{L}: [\bar{\alpha}_1,\dots,\bar{\alpha}_{2^{n-1}}]\otimes[1,1] + [\gamma_1(\vec\beta),\dots,\gamma_{2^{n-1}}(\vec\beta)]\otimes[1,-1] = [\alpha_1,\dots,\alpha_{2^n}]
\end{equation}

To build the $\mathcal(L)$ mapping completely, we need to build the $\gamma(\vec\beta)$ mapping. Now, let us write down the full formula for $U_{tail}$ using \eqref{eq:certralreccurentformula}:

\begin{equation}
    U_{tail} = \prod_{i=1}^{2^{n-1}} \biggl( \Bigl(I_{2^{n-1}} \ot D_1(\beta_i)\Bigr) \cdot L(A_n(i)) \biggr)
    \label{eq:U_tail_def}
\end{equation}

Using definitions \ref{def:D_n} and \ref{def:X}, we can obtain:

\begin{equation}
    D_1(\varphi_1)D_1(\varphi_2) = D_1(\varphi_1+\varphi_2) \quad XD_1(\varphi)X = D_1(-\varphi)
    \label{eq:D_prop}
\end{equation}

In the next Section \ref{sec:NCOTAN} we consider the necessary conditions that the sequence $A_n$ must satisfy. Of all possible $A_n$ satisfying these conditions, we can choose the one for which the proof is simple, without loss of generality.  All other $A_n$,  allowed in such a recurrent decomposition, can be constructed using symmetries which we will discuss in Section \ref{sec:symmetries}.

\subsection{Necessary conditions on the $A_n$\label{sec:NCOTAN}}

By substituting the values of $L(A_n(i))$ from Definition \ref{def:Lk}, we obtain:

\begin{equation}
    U_{tail} = \prod_{i=1}^{2^{n-1}} I^{\ot (A_n(i)-1)} \ot \biggl(\pi_0 \ot I^{\ot (n-A_n(i)-1)} \ot D_1(\beta_i) +
         \pi_1 \ot I^{\ot (n-A_n(i)-1)}\ot D_1(\beta_i)X \biggr)
    \label{eq:U_tail}
\end{equation}

It is easy to see, that, the matrices $\pi_1 \ot I^{\ot (n-A_n(i)-1)}\ot D_1(\beta_i)X$ are not diagonal for any $i$.

This imposes three necessary but not sufficient conditions on the sequence $A_n$ in order for $U_{tail}$ to be diagonal.

First, by definition, the sequence $A_n$ contains an exponential number of elements: $2^{n-1}$, as we can see in the formula \eqref{eq:U_tail}. Second, their values lie among the integers $\{1, 2, \dots, n-1\}$. Third, each value appears an even number of times in the sequence $A_n$.

The first two conditions follow directly from the definition of $A_n$, while the last one will be proven in the Prop. \ref{prop:pairityX}.

\begin{prop}[On the parity of $X$ operators\label{prop:pairityX}]
If $\exists m$ such that $\sum_{i=1}^{2^{n-1}}\delta_{m,A_n(i)} \mod 2 = 1$, then $U_{tail}$ is non-diagonal.
\end{prop}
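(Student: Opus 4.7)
The plan is to expand $U_{tail}$ fully by bilinearity, show that the resulting non-vanishing tensor terms are linearly independent, and exhibit one such term whose restriction to the last qubit is non-diagonal under the stated hypothesis; that forces $U_{tail}$ itself to be non-diagonal. Concretely, I would first rewrite each factor in \eqref{eq:U_tail} as a sum of its two tensor-product branches (the $\pi_0$ branch and the $\pi_1$ branch) and expand the whole product, indexing the resulting sum by strings $s\in\{0,1\}^{2^{n-1}}$, where $s_i$ selects which branch the $i$-th factor contributes. Because tensor product distributes over matrix multiplication, each term of the expansion is itself a single tensor product over the $n$ qubits: at position $k\in\{1,\dots,n-1\}$ the entry is $\prod_{i\,:\,A_n(i)=k}\pi_{s_i}$, while at the last position it is $\prod_{i=1}^{2^{n-1}} D_1(\beta_i)X^{s_i}$ (both products taken in order of increasing $i$).

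Next I would exploit $\pi_0\pi_1=\pi_1\pi_0=0$ to conclude that a term vanishes unless $s$ is constant on every fibre $A_n^{-1}(k)$. The non-vanishing terms are therefore parametrised by a vector $(s^{(1)},\dots,s^{(n-1)})\in\{0,1\}^{n-1}$ recording the common value of $s_i$ on each fibre, and the first $n-1$ qubit positions of each such term are $\bigotimes_{k=1}^{n-1}\pi_{s^{(k)}}$. These projectors are mutually orthogonal as $(s^{(1)},\dots,s^{(n-1)})$ varies, so the $2^{n-1}$ surviving terms are linearly independent and no cancellation between them is possible; consequently $U_{tail}$ is diagonal iff each surviving term is individually diagonal.

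Then I would analyse the last qubit. Applying \eqref{eq:D_prop} and $X^2=I$ repeatedly, the product $\prod_{i} D_1(\beta_i)X^{s_i}$ collapses to $D_1(\tilde\varphi)\cdot X^{N\bmod 2}$ for some phase $\tilde\varphi$, where $N=\sum_i s_i$. Hence this one-qubit factor is diagonal when $N$ is even and strictly anti-diagonal when $N$ is odd. I would finish by specialising to the surviving term with $s^{(m)}=1$ and $s^{(k)}=0$ for $k\ne m$, so that $s_i=\delta_{m,A_n(i)}$ and $N=\sum_{i=1}^{2^{n-1}}\delta_{m,A_n(i)}$. The hypothesis makes $N$ odd, so this term is non-diagonal, and the linear-independence statement above then yields that $U_{tail}$ itself is non-diagonal. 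The only real bookkeeping subtlety is keeping the non-commuting product on the last qubit correctly ordered when invoking $XD_1(\beta)X=D_1(-\beta)$; however, the parity of $\sum_i s_i$ — which is all the argument actually uses — is preserved under every step of that simplification.
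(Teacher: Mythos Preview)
Your argument is correct and slightly more systematic than the paper's. The paper does not perform the full $\{0,1\}^{2^{n-1}}$ expansion: instead it first permutes tensor factors so that $m=n-1$, then views every factor of $U_{tail}$ as a $2\times 2$ block matrix over that single qubit (with $2\times 2$ entries acting on the last qubit). Factors coming from $L(m)$ contribute one extra $X$ to the lower block relative to the upper one, while factors coming from $L(k)$ with $k\ne m$ contribute the same number of $X$'s to both blocks; hence the two blocks differ in $X$-parity by exactly $l=\sum_i\delta_{m,A_n(i)}$, so one of them is anti-diagonal. Your route---expand over \emph{all} control positions, kill terms via $\pi_0\pi_1=0$, and note that the survivors occupy mutually orthogonal diagonal blocks so no cancellation can occur---buys you a cleaner justification of the ``no cancellation'' step, which the paper handles rather informally (it drops the $D_1$'s and speaks of a ``total shift $b$'' without defining it precisely). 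The paper's reduction is more economical in that it only decomposes over the one relevant qubit, but both proofs rest on the same parity observation about $\prod_i D_1(\beta_i)X^{s_i}$. One small imprecision in your write-up: the surviving terms are parametrised by $\{0,1\}^{r}$ where $r$ is the number of values actually hit by $A_n$, not necessarily by $\{0,1\}^{n-1}$; this does not affect the argument, since $m$ is in the range by hypothesis and the orthogonality of the occupied blocks still forbids cancellation.
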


\begin{proof}
Let $\sum_{i=1}^{2^{n-1}}\delta_{m,A_n(i)} = l$.
By reordering the tensor spaces, we can always reduce the problem to the case $m = n - 1$. Then, in the full product we multiply matrices of the form:

\begin{equation}
    I^{\ot(n-2)}\ot\begin{bmatrix}
        D_1(\beta_i)&0\\0&D_1(\beta_i)X
    \end{bmatrix}
\end{equation}

As well as matrices of the form:

\begin{equation}
    I^{\ot(n-1-k)}\ot\left(\pi_0\ot I^{\ot k}\begin{bmatrix}
        D_1(\beta_i)&0\\0&D_1(\beta_i)
    \end{bmatrix} + \pi_1\ot I^{\ot k}\begin{bmatrix}
        D_1(\beta_i)X&0\\0&D_1(\beta_i)X
    \end{bmatrix}\right)
\end{equation}

Since in the second case the number of $X$ operators in each cell on the diagonal of the final block-diagonal matrix is the same, they can change the number of $X$ operators in each respective cell by the same amount. Let the total shift in the number of $X$ operators be $b$. Then the final shift on each diagonal block will be:

\begin{equation}
    \begin{bmatrix}
        X^{b}&0\\0&X^{b+l}
    \end{bmatrix} = \begin{bmatrix}
        X^{b}&0\\0&X^{b+(l\mod 2)}
    \end{bmatrix} = \begin{bmatrix}
        X^{b}&0\\0&X^{b+1}
    \end{bmatrix}
\end{equation}

Here we omit the $D_1$ operators for brevity and clarity, as they do not affect diagonalness and would clutter the notation.
In such a configuration, either $b$ is even (making the upper block diagonal and the lower one non-diagonal), or $b+1$ is even (making the lower block diagonal and the upper one non-diagonal).
This observation completes the proof.
\end{proof}

Note that all $m$ are not related to each other, since they belong to different tensor spaces, which means that the Prop.~\ref{prop:pairityX} remains valid for any number of $m$ suitable for its conditions.

\subsection{Linearity of the $\mathcal{L}$ mapping\label{sec:linearityL}}

Now we are ready to find $\vec\gamma(\vec\beta)$ for achieve constructing of $\mathcal{L}$ map. We also prove linearity of map $\vec\gamma(\vec\beta)$ and linearity $\mathcal{L}$, as a consequence of linearity $\vec\gamma(\vec\beta)$. First of all, let's choose appropriate sequence $A_n$ for our goals.
The simplest sequence $A_n$ can be generated automatically by this condition. For this, we introduce the concept of a Perfect Binary Tree (PBT):

 \begin{definition}(Perfect Binary Tree (PBT)\label{def:PBT})
The set $\mathcal{P}_h$ of perfect binary trees of height $h \in \mathbb{N}$ is defined recursively. A tree is a pair $T=(V,E)$ of vertices and edges.
For $h=1$, $\mathcal{P}_1$ contains the single tree $T_1 = (\{v_1\}, \emptyset)$.
For $h>1$, a tree $T=(V,E) \in \mathcal{P}_h$ is formed from a new root vertex $v$ and two trees $T_L=(V_L, E_L)$ and $T_R=(V_R, E_R)$ from $\mathcal{P}_{h-1}$ with $V_L \cap V_R = \emptyset$, such that
\begin{align}
    V &= V_L \cup \{v\} \cup V_R, \\
    E &= E_L \cup \{ (v, v_L), (v, v_R) \} \cup E_R ,
\end{align}
where $v_L$ and $v_R$ are the roots of $T_L$ and $T_R$ respectively.
\end{definition}

The graphical representation of this os presented on Fig. \ref{fig:PBT_steps} and more details about binary trees can be found in \cite{harder_ece250_pbt} and \cite{Knuth1997TAOCP1}.

\begin{figure}
\centering
\scalebox{0.75}{%
\begin{tabular}{cccc}
\begin{tikzpicture}[arrowstyle/.style={-{Stealth[length=2mm,width=2mm]}, thick, draw=black!70}]
  \node[treenode] (root) {$1$};
  \coordinate (tRoot) at ($(root.south)+(0,-10mm)$);
  \draw[arrowstyle] (root.south) -- (tRoot);
  \node[anchor=north] at (tRoot) {$\{1\}$};
\end{tikzpicture}

&
\begin{tikzpicture}[level distance=10mm,
  level 1/.style={sibling distance=10mm},
  arrowstyle/.style={-{Stealth[length=2mm,width=2mm]}, thick, draw=black!70}
  ]
  \node[treenode] (root) {$1$}
    child {node[treenode] (L) {$2$}}
    child {node[treenode] (R) {$2$}};
  \coordinate (tL)    at ($(L.south)+(0,-7mm)$);
  \coordinate (tRoot) at ($(root.south)+(0,-17mm)$);
  \coordinate (tR)    at ($(R.south)+(0,-7mm)$);
  \draw[arrowstyle] (L.south)    -- (tL);
  \draw[arrowstyle] (root.south) -- (tRoot);
  \draw[arrowstyle] (R.south)    -- (tR);
  \node[anchor=north] at (tL)    {$\{2,$};
  \node[anchor=north] at (tRoot) {$1$};
  \node[anchor=north] at (tR)    {$, 2\}$};
\end{tikzpicture}

&
\begin{tikzpicture}[level distance=10mm,
  level 1/.style={sibling distance=20mm},
  level 2/.style={sibling distance=10mm},
  arrowstyle/.style={-{Stealth[length=2mm,width=2mm]}, thick, draw=black!70}
  ]

  \node[treenode] (root) {$1$}
    child {node[treenode] (L) {$2$}
      child {node[treenode] (LL) {$3$}}
      child {node[treenode] (LR) {$3$}}
    }
    child {node[treenode] (R) {$2$}
      child {node[treenode] (RL) {$3$}}
      child {node[treenode] (RR) {$3$}}
    };
  \coordinate (tLL) at ($(LL.south)+(0,-7mm)$);
  \coordinate (tLR) at ($(LR.south)+(0,-7mm)$);
  \coordinate (tL)  at ($(L.south)+(0,-17mm)$);
  \coordinate (tR)  at ($(R.south)+(0,-17mm)$);
  \coordinate (tRL) at ($(RL.south)+(0,-7mm)$);
  \coordinate (tRR) at ($(RR.south)+(0,-7mm)$);
  \coordinate (tRoot) at ($(root.south)+(0,-27mm)$);
  \draw[arrowstyle] (LL.south) -- (tLL);
  \draw[arrowstyle] (LR.south) -- (tLR);
  \draw[arrowstyle] (L.south)  -- (tL);
  \draw[arrowstyle] (R.south)  -- (tR);
  \draw[arrowstyle] (RL.south) -- (tRL);
  \draw[arrowstyle] (RR.south) -- (tRR);
  \draw[arrowstyle] (root.south) -- (tRoot);
  \node[anchor=north] at (tLL) {$\{3, $};
  \node[anchor=north] at (tLR) {$3, $};
  \node[anchor=north] at (tL)  {$2, $};
  \node[anchor=north] at (tR)  {$2, $};
  \node[anchor=north] at (tRL) {$3,$};
  \node[anchor=north] at (tRR) {$3\}$};
  \node[anchor=north] at (tRoot) {$ 1,$};
\end{tikzpicture}

&
\begin{tikzpicture}[level distance=10mm,
  level 1/.style={sibling distance=40mm},
  level 2/.style={sibling distance=20mm},
  level 3/.style={sibling distance=10mm},
  arrowstyle/.style={-{Stealth[length=2mm,width=2mm]}, thick, draw=black!70}
  ]
  \node[treenode] (root) {$1$}
    child {node[treenode] (L) {$2$}
      child {node[treenode] (L1) {$3$}
        child {node[treenode] (L11) {$4$}}
        child {node[treenode] (L12) {$4$}}
      }
      child {node[treenode] (L2) {$3$}
        child {node[treenode] (L21) {$4$}}
        child {node[treenode] (L22) {$4$}}
      }
    }
    child {node[treenode] (R) {$2$}
      child {node[treenode] (R1) {$3$}
        child {node[treenode] (R11) {$4$}}
        child {node[treenode] (R12) {$4$}}
      }
      child {node[treenode] (R2) {$3$}
        child {node[treenode] (R21) {$4$}}
        child {node[treenode] (R22) {$4$}}
      }
    };
  \coordinate (tL11) at ($(L11.south)+(0,-7mm)$);
  \coordinate (tL12) at ($(L12.south)+(0,-7mm)$);
  \coordinate (tL21) at ($(L21.south)+(0,-7mm)$);
  \coordinate (tL22) at ($(L22.south)+(0,-7mm)$);
  \coordinate (tR11) at ($(R11.south)+(0,-7mm)$);
  \coordinate (tR12) at ($(R12.south)+(0,-7mm)$);
  \coordinate (tR21) at ($(R21.south)+(0,-7mm)$);
  \coordinate (tR22) at ($(R22.south)+(0,-7mm)$);

  \coordinate (tL1) at ($(L1.south)+(0,-17mm)$);
  \coordinate (tL2) at ($(L2.south)+(0,-17mm)$);
  \coordinate (tR1) at ($(R1.south)+(0,-17mm)$);
  \coordinate (tR2) at ($(R2.south)+(0,-17mm)$);

  \coordinate (tL)  at ($(L.south)+(0,-27mm)$);
  \coordinate (tR)  at ($(R.south)+(0,-27mm)$);
  \coordinate (tRoot) at ($(root.south)+(0,-37mm)$);
  \foreach \a/\b in {L11/tL11,L12/tL12,L21/tL21,L22/tL22,
                     R11/tR11,R12/tR12,R21/tR21,R22/tR22,
                     L1/tL1,L2/tL2,R1/tR1,R2/tR2,
                     L/tL,R/tR,root/tRoot}
    \draw[arrowstyle] (\a.south) -- (\b);
  \node[anchor=north] at (tL11) {$\{4,$};
  \node[anchor=north] at (tL12) {$4,$};
  \node[anchor=north] at (tL21) {$4,$};
  \node[anchor=north] at (tL22) {$4,$};

  \node[anchor=north] at (tR11) {$4,$};
  \node[anchor=north] at (tR12) {$4,$};
  \node[anchor=north] at (tR21) {$4,$};
  \node[anchor=north] at (tR22) {$4\}$};

  \node[anchor=north] at (tL1) {$3,$};
  \node[anchor=north] at (tL2) {$3,$};
  \node[anchor=north] at (tR1) {$3,$};
  \node[anchor=north] at (tR2) {$3,$};

  \node[anchor=north] at (tL)  {$2,$};
  \node[anchor=north] at (tR)  {$2,$};

  \node[anchor=north] at (tRoot) {$ 1,$};

\end{tikzpicture}

\\[2em]

\multicolumn{4}{c}{
\begin{tikzpicture}[level distance=10mm,
  level 1/.style={sibling distance=80mm},
  level 2/.style={sibling distance=40mm},
  level 3/.style={sibling distance=20mm},
  level 4/.style={sibling distance=10mm},
  arrowstyle/.style={-{Stealth[length=2mm,width=2mm]}, thick, draw=black!70}
  ]

  \node[treenode] (root) {$1$}
    child {node[treenode] (L) {$2$}
      child {node[treenode] (L1) {$3$}
        child {node[treenode] (L11) {$4$}
          child {node[treenode] (L111) {$5$}}
          child {node[treenode] (L112) {$5$}}
        }
        child {node[treenode] (L12) {$4$}
          child {node[treenode] (L121) {$5$}}
          child {node[treenode] (L122) {$5$}}
        }
      }
      child {node[treenode] (L2) {$3$}
        child {node[treenode] (L21) {$4$}
          child {node[treenode] (L211) {$5$}}
          child {node[treenode] (L212) {$5$}}
        }
        child {node[treenode] (L22) {$4$}
          child {node[treenode] (L221) {$5$}}
          child {node[treenode] (L222) {$5$}}
        }
      }
    }
    child {node[treenode] (R) {$2$}
      child {node[treenode] (R1) {$3$}
        child {node[treenode] (R11) {$4$}
          child {node[treenode] (R111) {$5$}}
          child {node[treenode] (R112) {$5$}}
        }
        child {node[treenode] (R12) {$4$}
          child {node[treenode] (R121) {$5$}}
          child {node[treenode] (R122) {$5$}}
        }
      }
      child {node[treenode] (R2) {$3$}
        child {node[treenode] (R21) {$4$}
          child {node[treenode] (R211) {$5$}}
          child {node[treenode] (R212) {$5$}}
        }
        child {node[treenode] (R22) {$4$}
          child {node[treenode] (R221) {$5$}}
          child {node[treenode] (R222) {$5$}}
        }
      }
    };
  \foreach \n/\c in {L111/tL111,L112/tL112,L121/tL121,L122/tL122,
                     L211/tL211,L212/tL212,L221/tL221,L222/tL222,
                     R111/tR111,R112/tR112,R121/tR121,R122/tR122,
                     R211/tR211,R212/tR212,R221/tR221,R222/tR222}
    \coordinate (\c) at ($(\n.south)+(0,-7mm)$);

  \foreach \n/\c in {L11/tL11,L12/tL12,L21/tL21,L22/tL22,
                     R11/tR11,R12/tR12,R21/tR21,R22/tR22}
    \coordinate (\c) at ($(\n.south)+(0,-17mm)$);

  \foreach \n/\c in {L1/tL1,L2/tL2,R1/tR1,R2/tR2}
    \coordinate (\c) at ($(\n.south)+(0,-27mm)$);

  \foreach \n/\c in {L/tL,R/tR}
    \coordinate (\c) at ($(\n.south)+(0,-37mm)$);
  \coordinate (tRoot) at ($(root.south)+(0,-47mm)$);

  \foreach \a/\b in {L111/tL111,L112/tL112,L121/tL121,L122/tL122,
                     L211/tL211,L212/tL212,L221/tL221,L222/tL222,
                     R111/tR111,R112/tR112,R121/tR121,R122/tR122,
                     R211/tR211,R212/tR212,R221/tR221,R222/tR222,
                     L11/tL11,L12/tL12,L21/tL21,L22/tL22,
                     R11/tR11,R12/tR12,R21/tR21,R22/tR22,
                     L1/tL1,L2/tL2,R1/tR1,R2/tR2,
                     L/tL,R/tR,root/tRoot}
    \draw[arrowstyle] (\a.south) -- (\b);

  \node[anchor=north] at (tL111) {$\{5,$};
  \node[anchor=north] at (tL112) {$5,$};
  \node[anchor=north] at (tL121) {$5,$};
  \node[anchor=north] at (tL122) {$5,$};
  \node[anchor=north] at (tL211) {$5,$};
  \node[anchor=north] at (tL212) {$5,$};
  \node[anchor=north] at (tL221) {$5,$};
  \node[anchor=north] at (tL222) {$5,$};

  \node[anchor=north] at (tR111) {$5,$};
  \node[anchor=north] at (tR112) {$5,$};
  \node[anchor=north] at (tR121) {$5,$};
  \node[anchor=north] at (tR122) {$5,$};
  \node[anchor=north] at (tR211) {$5,$};
  \node[anchor=north] at (tR212) {$5,$};
  \node[anchor=north] at (tR221) {$5,$};
  \node[anchor=north] at (tR222) {$5\}$};

  \node[anchor=north] at (tL11) {$4,$};
  \node[anchor=north] at (tL12) {$4,$};
  \node[anchor=north] at (tL21) {$4,$};
  \node[anchor=north] at (tL22) {$4,$};
  \node[anchor=north] at (tR11) {$4,$};
  \node[anchor=north] at (tR12) {$4,$};
  \node[anchor=north] at (tR21) {$4,$};
  \node[anchor=north] at (tR22) {$4,$};

  \node[anchor=north] at (tL1) {$3,$};
  \node[anchor=north] at (tL2) {$3,$};
  \node[anchor=north] at (tR1) {$3,$};
  \node[anchor=north] at (tR2) {$3,$};

  \node[anchor=north] at (tL) {$2,$};
  \node[anchor=north] at (tR) {$2,$};

  \node[anchor=north] at (tRoot) {$1,$};

\end{tikzpicture}
}
\end{tabular}
}
\caption{Perfect Binary Tree for steps $\in\{1,2,3,4,5\}$ and using examples.}
\label{fig:PBT_steps}
\end{figure}

If we now use a sequence instead of the sets $V$ and $E$, it will help us to implicitly take into account the topological structure of the tree. Instead of vertices, let's use their height $h$ from $1$ for above to $n-1$ for below. Then we simply "project" the resulting numbers in the diagram into a sequence, implicitly encoding edges, see Fig. \ref{fig:PBT_steps}.
In terms of tensor space indices, this can be more easily rewritten as a sequence of level indices from left to right:
\begin{equation}
a_{\cdot,n} = (a_{\cdot,n-1} + 1) \circ \{1\} \circ (a_{\cdot,n-1} + 1), \quad a_{\cdot,2} = \{1\}
\end{equation}

In such a sequence, by construction, all indeces occur an even number of times, with the exception of index $1$. We append it to the end of the sequence and obtain:
\begin{equation}
A_{\cdot,n} =  a_{\cdot,n} \circ \{1\}
\end{equation}

Since $A_{n}$ and PBT are in one-to-one correspondence, we will also call $A_n$ a binary tree.
Let's prove that such a sequence $A_n$ meets the necessary conditions \ref{sec:NCOTAN}.

\begin{prop}[The binary tree with an additional element $\{1\}$ satisfies the necessary conditions for the Section \ref{sec:NCOTAN}.]
\end{prop}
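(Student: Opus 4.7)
The plan is to verify the three necessary conditions from Section~\ref{sec:NCOTAN} by induction on $n$, using the recursive definition $a_{\cdot,n} = (a_{\cdot,n-1}+1) \circ \{1\} \circ (a_{\cdot,n-1}+1)$ with $a_{\cdot,2} = \{1\}$. The base case is immediate: for $n = 2$ one has $A_{\cdot,2} = \{1\} \circ \{1\} = \{1,1\}$, which has length $2 = 2^{2-1}$, all entries in $\{1\}$, and the value $1$ occurring twice, so all three conditions hold.

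For the length condition, I would set $\ell_n = |a_{\cdot,n}|$. The recursion yields $\ell_n = 2\ell_{n-1} + 1$ with $\ell_2 = 1$, and a routine induction gives $\ell_n = 2^{n-1} - 1$, whence $|A_{\cdot,n}| = \ell_n + 1 = 2^{n-1}$, as required. For the range condition, the inserted middle element is $1$, and by the inductive hypothesis the two outer copies of $a_{\cdot,n-1}+1$ take values in $\{2,\dots,n-1\}$; appending the final $\{1\}$ does not enlarge the range, so $A_{\cdot,n}$ takes values in $\{1,\dots,n-1\}$.

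The substantive step is the parity condition. I would let $c_n(k)$ denote the multiplicity of $k$ in $a_{\cdot,n}$. Because $a_{\cdot,n-1}+1$ takes values only in $\{2,\dots,n-1\}$, the middle $\{1\}$ is the sole source of ones, giving $c_n(1) = 1$. For $k \geq 2$, only the two outer blocks contribute, so $c_n(k) = 2 c_{n-1}(k-1)$, which is automatically even. Hence in $a_{\cdot,n}$ every value except $1$ has even multiplicity, while $1$ has multiplicity exactly $1$. Appending the final $\{1\}$ in $A_{\cdot,n} = a_{\cdot,n} \circ \{1\}$ then bumps the multiplicity of $1$ to $2$ and leaves all other multiplicities unchanged, so every value in $\{1,\dots,n-1\}$ occurs an even number of times in $A_{\cdot,n}$.

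I do not anticipate a serious obstacle here, as the entire argument is forced by the recursion. The one point worth emphasizing is that the single extra $\{1\}$ appended at the very end of $A_{\cdot,n}$ is precisely what repairs the odd multiplicity of $1$ arising from the tree root in $a_{\cdot,n}$; this is why the construction cannot simply use $a_{\cdot,n}$ on its own and must be padded by one additional element.
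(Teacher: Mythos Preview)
Your proof is correct and follows essentially the same approach as the paper: both solve the length recurrence $\ell_n = 2\ell_{n-1}+1$ to get $|A_{\cdot,n}| = 2^{n-1}$, and both observe that every value $k\ge 2$ inherits an automatically even multiplicity from the two copies of $a_{\cdot,n-1}+1$, while the appended $\{1\}$ pairs up the single root. Your version is slightly more explicit (you write out $c_n(k) = 2c_{n-1}(k-1)$ and also verify the range condition, which the paper leaves implicit), but the underlying argument is identical.
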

\begin{proof}

The binary tree with concatenation by $\{1\}$ is recursively defined as:

\begin{equation}
A_{\cdot,n} =  a_{\cdot,n} \circ \{1\} , \quad a_{\cdot,n} = (a_{\cdot,n-1} + 1) \circ \{1\} \circ (a_{\cdot,n-1} + 1), \quad a_{\cdot,2} = \{1\}
\label{fractal_sequence}
\end{equation}
\begin{enumerate}
    \item The length of the sequence $|A_n| = |a_n|+1$. Since $|a_n| = 2|a_{n-1}|+1$ and $|a_2| = 1$, we obtain
    \begin{equation}
        |a_n| = \underbrace{1+2(1+2(1+\dots+2(1+1)))}_{2^{n-1}} = 1+2+4+\dots+2^{n-2} = \frac{2^{n-1}-1}{2-1} = 2^{n-1}-1
    \end{equation}
    Then:
    \begin{equation}
         |A_n| = |a_n|+1 = 2^{n-1}-1+1 = 2^{n-1}
    \end{equation}
    \item The count of identical elements in the binary tree starting from 2 is even by construction. For the tree node $1$, its pair is the element from the concatenated singleton sequence $\{1\}$.
\end{enumerate}
\end{proof}

Let us solve the problem with our ansatz using such a form of the sequence $A_n$ \eqref{fractal_sequence}.

Then we can rewrite the formula \eqref{eq:U_tail} using the fact that the sequence $A_n$ contains all the numbers from the set $\{1,2,\dots,n-1\}$. Thus, we get the product of all permutations by simply opening the brackets and giving similar terms \eqref{eq:sumprodprod}.

\begin{equation}
    U_{tail} = \sum_{\rho\in \{0,1\}^{n-1}} \bigotimes_{j = 1}^{n-1} \pi_{\rho_{j}}\prod_{i=1}^{2^{n-1}} D_1(\beta_i)X^{\chi(A_n,\rho)_i}
    \label{eq:sumprodprod}
\end{equation}

Now the entire dependence on the sequence $A_n$ is encoded via the mapping $\chi$ to a sequence of the same length: $\chi(A_n,\rho)$.

It is also worth noting that we will not construct the exact mapping $\chi$ in this section, since we are not so much interested in the mapping itself as in its properties: because the number of $X$ operators after multiplication in each monomial is preserved, and since it was even, it remains so. Therefore, the main property of the function $\chi$ is:

\begin{equation}
    \sum_{i=1}^{2^{n-1}}\chi(A_n,\rho)_i\mod2 = 0
\end{equation}

Then:
\begin{equation}
    \exists \epsilon, \delta: \prod_{i=1}^{2^{n-1}} D_1(\beta_i)X^{\chi(A_n, \rho)_i} = D_1(\epsilon,\delta)
\end{equation}

Moreover, it is possible to write relations for $\epsilon$ and $\delta$:

\begin{equation}
    X\in SU(2), \forall i D_1(\beta_i) \in SU(2) \Rightarrow \prod_{i=1}^{2^{n-1}} D_1(\beta_i)X^{\chi(A_n,\rho)_i} \in SU(2) \Rightarrow \epsilon = -\delta \Rightarrow D_1(\epsilon,\delta) = D_1(\epsilon)
\end{equation}

Recalling the notation \eqref{eq:U_tail_general}, we obtain that $\forall i: \gamma'_i = 0$.

Then we can write:
\begin{equation}
\label{eq:U_tail_block_diagonal}
U_{tail} = \begin{bmatrix}
D_1(\gamma_1) & & & & &  \\
& D_1(\gamma_2) & & & & \\
& & D_1(\gamma_3) & & & \\
& & & \dotsc & & \\
& & & & & & D_1(\gamma_{2^{n-1}-1}) & \\
& & & & & & & D_1(\gamma_{2^{n-1}}) \\
\end{bmatrix}
\end{equation}

Now, let's define map between $\gamma$ and $\beta$:

\begin{definition}[$r_n$ matrix]
\label{def:r}
$r_n$ is the matrix of the linear mapping between the parameters $\beta$ in the decomposition \eqref{eq:sumprodprod} and the parameters $\gamma$ in the block diagonal form \eqref{eq:U_tail_block_diagonal}:
    \begin{equation}
\gamma= \beta r_n
\label{eq:linear_formula}
    \end{equation}
or with matrix indices $i$ adn $j$:
    \begin{equation}
    \gamma_i = \beta_j    r_{i,n}^j
    \end{equation}
\end{definition}

Formula \eqref{eq:linear_formula} is obvious since the presence of an even number of $X$ operators in the formula \eqref{eq:sumprodprod}, as already mentioned, can only change the sign of individual summation parameters. Leaving aside the explicit dependence at this stage, it can be understood that the dependence is at least linear, which means that the map $\mathcal{L}$ is also linear.

\section{Structure of $r_n$
\label{sec:r_formula}}

Now let us delve deeper into the structure of $r_n$, defined by Def. \ref{def:r}. It is important to understand that this is the matrix of a linear mapping between the parameters $\vec{\beta}$ and $\vec{\gamma}$.

Since the number of $X$ operators is even for each $\rho$, each pair of $X$ operators flips the sign of the parameters of all diagonal matrices between them (see \eqref{eq:D_prop}), we can perform the following transformation:

\begin{equation}
    \prod_{i=1}^{2^{n-1}} D_1(\beta_i)X^{\chi(A_n,\rho)_i} = \prod_{i=1}^{2^{n-1}} D_1((-1)^{\sum_{k = 1}^{i-1}\chi(A_n,\rho)_k}\beta_i) = D_1\bigg(\sum_{i=1}^{2^{n-1}}(-1)^{\sum_{k = 1}^{i-1}\chi(A_n,\rho)_k}\beta_i \bigg)
\end{equation}

Since $\rho$ is simply the binary representation of the index of the matrix in the block-diagonal matrix $U_{tail}$, then, considering:

\begin{equation}
    j(\rho) = \sum_{i=1}^{{n-1}}2^{n-1-i}\rho_{i}, \quad \rho_i = \frac{j\mod 2^{n-i} - (j\mod2^{n+1-i}) }{2^{n-1-i}}
\end{equation}

The proof is given in Appendix \ref{sec:j_to_rho}.
Then we can write several transformations:

\begin{align}
    D_1(\gamma_j) = D_1\bigg(\sum_{i=1}^{2^{n-1}}(-1)^{\sum_{k = 1}^{i-1}\chi(A_n,\rho(j))_k}\beta_i \bigg) \Rightarrow \\
    \gamma_j = \sum_{i=1}^{2^{n-1}}(-1)^{\sum_{k = 1}^{i-1}\chi(A_n,\rho(j))_k}\beta_i \Rightarrow\\
    \gamma_j = r_{j,n}^i\beta_i, \quad r_{j,n}^i = (-1)^{\sum_{k = 1}^{i-1}\chi(A_n,\rho(j))_k}
\end{align}

So, we have found the linear mapping, we see its form, and now all that remains is to finally understand the structure of the mapping \(\chi\).
The columns of this matrix depend on \(i\) as follows:

\begin{align}
    r^1_{\cdot,n} = \begin{bmatrix}
        1\\1\\\cdots\\1
    \end{bmatrix} \quad r^2_{\cdot,n} = \begin{bmatrix}
        (-1)^{\chi(A_n,\rho(1))_1}\\
        (-1)^{\chi(A_n,\rho(1))_1}\\
        \cdots \\
        (-1)^{\chi(A_n,\rho(2^{n-1}))_1}
\end{bmatrix} \quad \cdots\quad r^i_{\cdot,n} = \begin{bmatrix}
        (-1)^{\sum_{k = 1}^{i-1}\chi(A_n,\rho(1))_k}\\
        (-1)^{\sum_{k = 1}^{i-1}\chi(A_n,\rho(2))_k}\\
        \cdots \\
        (-1)^{\sum_{k = 1}^{i-1}\chi(A_n,\rho(2^{n-1}))_k}\\
\end{bmatrix} \\ r^{i+1}_{\cdot,n} = \begin{bmatrix}
        (-1)^{\sum_{k = 1}^{i}\chi(A_n,\rho(1))_k}\\
        (-1)^{\sum_{k = 1}^{i}\chi(A_n,\rho(2))_k}\\
        \cdots \\
        (-1)^{\sum_{k = 1}^{i}\chi(A_n,\rho(2^{n-1}))_k}\\
\end{bmatrix}
\end{align}

Consequently, it is clear that neighboring columns differ by element-wise multiplication by the column:

\begin{equation}
    \bar{r}^i_{\cdot,n} = \begin{bmatrix}
        (-1)^{\chi(A_n,\rho(1))_i}\\
        (-1)^{\chi(A_n,\rho(2))_i}\\
        \cdots \\
        (-1)^{\chi(A_n,\rho(2^{n-1}))_i}\\
    \end{bmatrix}
\end{equation}

But what exactly does this form mean?

It precisely corresponds to multiplying the angle parameters by $1$ or $-1$ depending on the presence or absence of the operator $X$ at the previous step, according to formula \eqref{eq:U_tail}. Indeed, in the $k$-th tensor product, the angle formula then splits as follows: $\pi_0 \rightarrow 1$, $\pi_1 \rightarrow -1$, and as a result, such a column looks like:

\begin{equation}
    \bar{r}^i_{\cdot,n} = ([1,1]^{\ot k(i)-1}\ot[1,-1]\ot[1,1]^{\ot n-1-k(i)})^T.
\end{equation}

But in this case, $k(i)$ is simply an element of the sequence $A_n$, then

\begin{equation}
    \bar{r}^i_{\cdot,n} = ([1,1]^{\ot A_n(i)-1}\ot[1,-1]\ot[1,1]^{\ot n-1-A_n(i)})^T.
\end{equation}

Then in the end the matrix looks like
\begin{equation}
r_n = \left[\prod_{k=1}^i \bar{r}^i_{\cdot,n} \text{ for } i\in\{1,\dots,2^{n-1}\}\right],
\end{equation}

where $\prod$ denotes element by element multiplication.

This is the matrix of the linear transformation obtained during the recursive construction with the given rule for $A_n$, which we were looking for.

For $n\in\{2, 3, 4\}$, the matrices look as follows, and for the sequences $A_n$ defined in order \eqref{fractal_sequence}, we get the matrices:

\begin{equation}
r_2 = \begin{bmatrix}1 & 1 \\ 1 & -1\end{bmatrix}
r_3 = \begin{bmatrix}1 & 1 & 1 & 1 \\ 1 & -1 & -1 & 1 \\ 1 & 1 & -1 & -1 \\ 1 & -1 & 1 & -1\end{bmatrix}
r_4 = \begin{bmatrix} 1 & 1 & 1 & 1 & 1 & 1 & 1 & 1\\
1 & -1 & -1 & 1 & 1 & -1 & -1 & 1 \\
1 & 1 & -1 & -1 & -1 & -1 & 1 & 1 \\
1 & -1 & 1 & -1 & -1 & 1 & -1 & 1 \\
1 & 1 & 1 & 1 & -1 & -1 & -1 & -1 \\
1 & -1 & -1 & 1 & -1 & 1 & 1 & -1 \\
1 & 1 & -1 & -1 & 1 & 1 & -1 & -1 \\
1 & -1 & 1 & -1 & 1 & -1 & 1 & -1\end{bmatrix}
\label{eq:matrices}
\end{equation}

However, in all rows and columns, except for the bottom row and the left column, the number of $+1$ and $-1$ match. This suggests that matrices can be created by rearranging the runoff and columns of the tensor product of the very first matrix, as can be seen numerically from examples.

For each of these three matrices, you can make sure that their inverses obey the relation:

\begin{equation}
    r_n^{-1} = \frac{1}{2^{n-1}}r_n^T
    \label{eq:r_n_minus1}
\end{equation}

The relationship between the absolute values of the determinants for different orders $n$ is given by:

\begin{equation}
    |\det(r_n)| = |\det(r_2)|^{(n-1)\cdot2^{n-2}}
\end{equation}

Note that the dependence of the tensor degree $r_2$ is as follows:

\begin{equation}
    |\det(r_2)^{\ot k}| = |\det(r_2)|^{k\cdot2^{k-1}}
\end{equation}

That is, just as if the equality were true up to the permutation of rows and columns:

\begin{equation}
    r_n =_\sim r_2^{\ot (n-1)}
    \label{eq:r_nrecursive}
\end{equation}

where $=_\sim$ means "up to permutations". In the next Section \ref{sec:rT}, we will prove this statement.

\subsection{Reverse $\mathcal{L}$ mapping\label{sec:rT}}

For finding $\mathcal{L}$ we firstly should invert $r_n$ matrix.

For $n=2$: $r_2 = \begin{bmatrix}1&1\\1&-1\end{bmatrix}$

\begin{theorem}{About permutations.\label{th:core}}
    Let $\sigma \in S_{2^n}$ and $\sigma(r_n) = (r^{\sigma(1)}_{\cdot,n}, r^{\sigma(2)}_{\cdot,n}, \dots, r^{\sigma(2^n)}_{\cdot,n})$, where $r^i_{\cdot,n}$ is the $i$-th column of the matrix $r_n$, then
\begin{equation}
    \forall n \in \mathbb{N}\ \exists \sigma:\ r_{n+1} = \sigma(r_2^{\otimes n})
\end{equation}

\end{theorem}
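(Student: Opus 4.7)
The plan is to identify each column of $r_{n+1}$ with its sign signature in $\mathbb{F}_2^n$, so that the claim reduces to showing these signatures enumerate $\mathbb{F}_2^n$ as the column index runs over $\{1,\ldots,2^n\}$. By construction, the columns of $r_2^{\otimes n}$ realize every signature in $\mathbb{F}_2^n$ exactly once, so any such enumeration automatically furnishes the desired permutation $\sigma$.

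Concretely, the formula $r^i_{\cdot,n+1} = \prod_{k=1}^i \bar{r}^k_{\cdot,n+1}$, together with the fact that $\bar{r}^k_{\cdot,n+1}$ is a tensor product with a single $[1,-1]$ factor at position $A_{n+1}(k)$, shows that $r^i_{\cdot,n+1} = \bigotimes_{j=1}^n v_j^{(i)}$, where $v_j^{(i)} = [1,-1]^T$ precisely when $j$ appears an odd number of times among $A_{n+1}(1),\ldots,A_{n+1}(i)$. Hence the $i$-th column is fully determined by the signature
\begin{equation*}
    \tau_i := \sum_{k=1}^{i} e_{A_{n+1}(k)} \in \mathbb{F}_2^n,
\end{equation*}
and the theorem reduces to the combinatorial claim that $\tau_1,\ldots,\tau_{2^n}$ are pairwise distinct.

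I would prove distinctness by induction on $n$, exploiting the recursion $A_{n+1} = H_n \circ H_n$ with $H_n := (a_n+1)\circ\{1\}$ of length $2^{n-1}$. Letting $u_1,\ldots,u_{2^{n-1}}$ denote the partial sums of $H_n$, one has $\tau_i = u_i$ for $i\le 2^{n-1}$ and $\tau_{2^{n-1}+j} = u_{2^{n-1}} + u_j$ in the second block. Applying the induction hypothesis to $A_n$, and using the parity observation that the full sum over $A_n$ vanishes, the partial sums of $a_n$ enumerate the nonzero vectors of $\mathbb{F}_2^{n-1}$. After the shift $j \mapsto j+1$ this places $u_1,\ldots,u_{2^{n-1}-1}$ in bijection with the nonzero vectors of the hyperplane $V_0 := \{v\in\mathbb{F}_2^n : v_1 = 0\}$, while the trailing $\{1\}$ in $H_n$ gives $u_{2^{n-1}}$ a $1$ in its first coordinate. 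A short case analysis on first coordinates then verifies $u_i \neq u_j + u_{2^{n-1}}$ for all relevant $i,j$, completing the inductive step.

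The main obstacle is the mismatch between the recursion for $A_n$ and the half-block $H_n$: the trailing singleton $\{1\}$ is not shifted with the rest, so the induction hypothesis cannot be applied to a rescaled copy of $A_n$ verbatim, and one must perform the $V_0$/complement split by hand. Once distinctness is established, the sought permutation is recovered concretely by setting $\sigma(i)$ equal to the index of the column of $r_2^{\otimes n}$ carrying signature $\tau_i$.
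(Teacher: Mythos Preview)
Your proposal is correct. The reduction you make in the first two paragraphs---identifying the $i$-th column of $r_{n+1}$ with the parity vector $\tau_i\in\mathbb{F}_2^n$ and observing that the columns of $r_2^{\otimes n}$ realise every element of $\mathbb{F}_2^n$ exactly once---is precisely what the paper does (equation~\eqref{eq:general_from_core} and the sentence following it). Up to that point the two arguments are identical.

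Where you diverge is in establishing that the $\tau_i$ exhaust $\mathbb{F}_2^n$. The paper handles this by appealing to the diagram in Fig.~\ref{fig:visual_proof}: the binary-tree structure of $A_{n+1}$ produces, at level $m$, a square wave with $2^{m-1}$ periods across the index range, so reading the $n$ levels vertically at each abscissa $i$ yields every binary string of length $n$ exactly once. Your route is instead a formal induction on $n$ using the factorisation $A_{n+1}=H_n\circ H_n$ with $H_n=(a_n+1)\circ\{1\}$, together with the parity fact that the full sum over $A_n$ vanishes (Prop.~\ref{prop:pairityX}) to pin down which partial sum equals $0$. The hyperplane split $V_0=\{v:v_1=0\}$ versus its complement then separates the two $H_n$-blocks cleanly, and the case analysis you sketch does go through: the first block fills $V_0\setminus\{0\}$ together with one vector outside $V_0$, the second block translates everything by $u_{2^{n-1}}$ and picks up $0$ at the very end.

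Your argument is more self-contained and does not rely on reading a figure whose ``obvious generalisation'' is left to the reader; the paper's diagram, on the other hand, makes the Gray-code-like structure visually immediate and is what later drives the symmetry discussion in Section~\ref{sec:symmetries}. A minor indexing remark: the paper's column formula has the sum running to $i-1$ rather than $i$, so your $\tau_i$ is the paper's signature for column $i+1$; since $\tau_{2^n}=0=\tau_0$ this is a cyclic shift and does not affect the bijectivity claim.
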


\begin{proof}
\begin{equation}     r^i_{\cdot,n+1} = \prod_{k=1}^i r_{A_{n+1}(k),n+1} = \prod_{k=1}^i \begin{bmatrix}1\\1\end{bmatrix}^{\ot A_{n+1}(k)-1}\ot\begin{bmatrix}1\\-1\end{bmatrix}\ot\begin{bmatrix}1\\1\end{bmatrix}^{\ot n-A_{n+1}(k)} = \end{equation}
\begin{equation}     = \bigotimes_{m=1}^n\begin{bmatrix}1\\-1\end{bmatrix}^{\Sigma_{k=1}^i\delta_{m,A_{n+1}(k)}}
\label{eq:general_from_core}\end{equation}

On the other hand, $r_2^{\ot n}$ has columns of the form:\begin{equation}     \bigotimes_{j=1}^n\begin{bmatrix}1\\-1\end{bmatrix}^{p_j},\quad \{p_j\}\in\{0,1\}^n \end{equation}

The values of $\sum_{k=1}^i\delta_{m,A_{n}(k)}$ are of interest only modulo $2$, because $\begin{bmatrix}1\\-1\end{bmatrix}^{2} = \begin{bmatrix}1\\1\end{bmatrix}$ in the sense of element by element product.

The proof ends with an analysis of the diagram (Fig. \ref{fig:visual_proof}). It shows the values of the expressions $\Sigma_{k=1}^i\delta_{m,A_{n}(k)}\mod 2$ for each $i$ and $m\in {1\dots n}$. This means that vertically we get all the combinations of zeros and ones from the set $\{0,1\}^n$, but this means that all the columns of the matrix $r_n$ coincide with the columns of tensor degree $r_2$.
\end{proof}

It is worth clarifying exactly how the diagram is constructed: each row is assigned to each subspace numbered by $m\in\{1,\dots,n-1\}$. The row is a continuous line with two values indicated by horizontal lines and vertical lines where the values change. We call the vertical lines 'gaps'. The horizontal line running above corresponds to  $\Sigma_{k=1}^i\delta_{m,A_{n}(k)}\mod 2 = 1$ case, and the horizontal line running below corresponds to $\Sigma_{k=1}^i\delta_{m,A_{n}(k)}\mod 2 = 0$ case. The number $i$ corresponding to the upper limit of the sum is the abscissa of the representation image of the diagram.

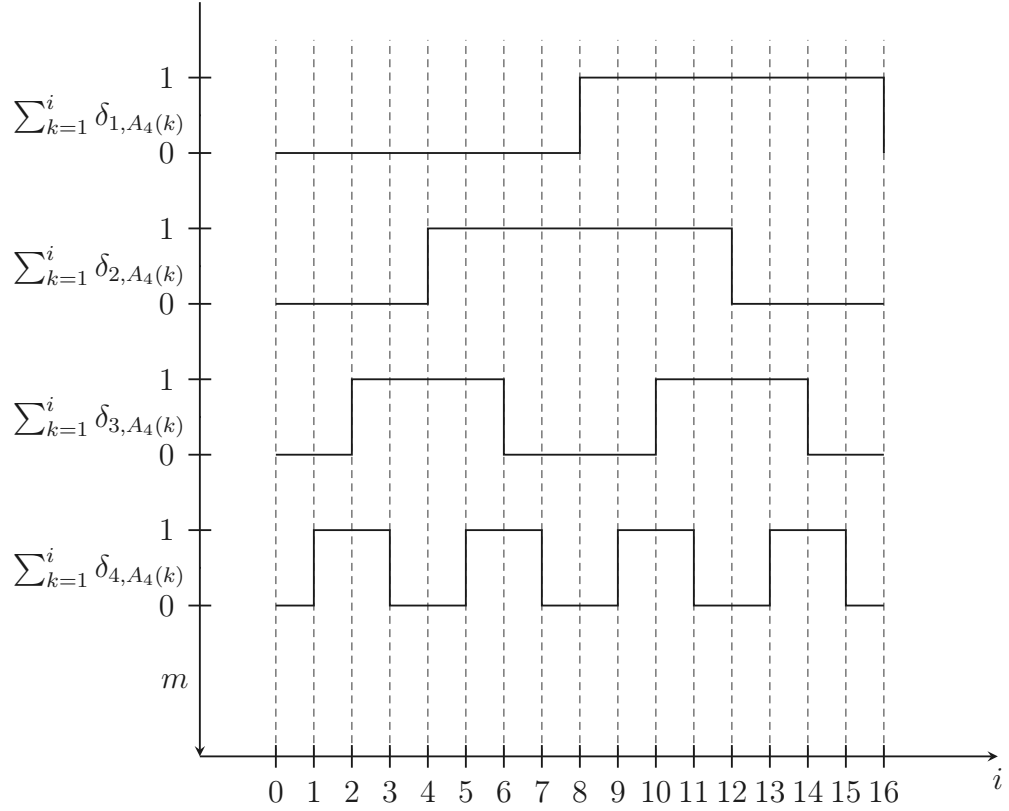
\begin{figure}

    \centering
    \begin{tikzpicture}[
    every path/.style={color=black!90, line width=0.7pt},
    every node/.style={color=black!90}
  ]
  \def\Xstart{1}    
  \def\Xspan{8}     
  \def\Xend{\Xstart+\Xspan} 
  \def\Ymax{10}     
  \def\Amp{0.5}     

  \tikzstyle{grid style}=[densely dashed, color=black!60, line width=0.5pt]

  \draw[-stealth] (0,0) -- (\Xend+1.5, 0) node[below] {$i$};
  \draw[-stealth] (0, \Ymax) -- (0, 0);

  \node[left] at (0, 1){$m$};


\draw[line width=0.7pt] (0, 8.5) -- (-0, 8.5) node[left=1pt] {$\sum_{k=1}^i\delta_{1,A_4(k)}$};
  \draw[line width=0.7pt] (0.15, 9) -- (-0.15, 9) node[left=1pt] {$1$};
  \draw[line width=0.7pt] (0.15, 8) -- (-0.15, 8) node[left=1pt] {$0$};
  \draw[line width=0.7pt] (0, 6.5) -- (-0, 6.5) node[left=1pt] {$\sum_{k=1}^i\delta_{2,A_4(k)}$};
  \draw[line width=0.7pt] (0.15, 7) -- (-0.15, 7) node[left=1pt] {$1$};
  \draw[line width=0.7pt] (0.15, 6) -- (-0.15, 6) node[left=1pt] {$0$};
  \draw[line width=0.7pt] (0, 4.5) -- (-0, 4.5) node[left=1pt] {$\sum_{k=1}^i\delta_{3,A_4(k)}$};
  \draw[line width=0.7pt] (0.15, 5) -- (-0.15, 5) node[left=1pt] {$1$};
  \draw[line width=0.7pt] (0.15, 4) -- (-0.15, 4) node[left=1pt] {$0$};
  \draw[line width=0.7pt] (0, 2.5) -- (-0, 2.5) node[left=1pt] {$\sum_{k=1}^i\delta_{4,A_4(k)}$};
  \draw[line width=0.7pt] (0.15, 3) -- (-0.15, 3) node[left=1pt] {$1$};
  \draw[line width=0.7pt] (0.15, 2) -- (-0.15, 2) node[left=1pt] {$0$};


  \foreach \x [evaluate=\x as \xx using {(\x+2)/2}] in {0,...,16} {
    \draw[grid style] (\xx, 0) -- ++(0, 9.5);
    \draw[line width=0.7pt] (\xx, 0.15) -- ++(0, -0.3)
      node[below=1pt] {\x};
  }


  \def\Y{8.5} 
  \draw (\Xstart, \Y-\Amp) -- (\Xstart + \Xspan/2, \Y-\Amp) |- (\Xstart+ \Xspan/2, \Y+\Amp) |- (\Xend, \Y+\Amp)|- (\Xend, \Y-\Amp);

  \def\Y{6.5} 
  \draw (\Xstart, \Y-\Amp) -- (\Xstart + \Xspan/4, \Y-\Amp) |- (\Xstart + 3*\Xspan/4, \Y+\Amp) |- (\Xend, \Y-\Amp);

  \def\Y{4.5} 
  \draw (\Xstart, \Y-\Amp) -- (\Xstart + \Xspan/8, \Y-\Amp) |- (\Xstart + 3*\Xspan/8, \Y+\Amp) |- (\Xstart + 5*\Xspan/8, \Y-\Amp) |- (\Xstart + 7*\Xspan/8, \Y+\Amp)|- (\Xend, \Y-\Amp);

  \def\Y{2.5} 
  \draw (\Xstart, \Y-\Amp) -- (\Xstart + 1*\Xspan/16, \Y-\Amp)
      |- (\Xstart + 3*\Xspan/16, \Y+\Amp)
      |- (\Xstart + 5*\Xspan/16, \Y-\Amp)
      |- (\Xstart + 7*\Xspan/16, \Y+\Amp)
      |- (\Xstart + 9*\Xspan/16, \Y-\Amp)
      |- (\Xstart + 11*\Xspan/16, \Y+\Amp)
      |- (\Xstart + 13*\Xspan/16, \Y-\Amp)
      |- (\Xstart + 15*\Xspan/16, \Y+\Amp)
      |- (\Xend, \Y-\Amp);

\end{tikzpicture}
    \caption{A diagram generated by a binary tree for $n = 4$. The generalization is obvious.}
    \label{fig:visual_proof}
\end{figure}

\begin{theorem}{$r_{n+1}^{-1} = \frac{1}{2^n}r_{n+1}^T$ \label{th:inverse}}
\end{theorem}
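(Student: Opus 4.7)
The plan is to reduce the statement to the base case $n=1$ by exploiting the permutation equivalence established in Theorem~\ref{th:core}, together with the multiplicative behaviour of tensor products under matrix multiplication and transposition.

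First I would record the two elementary facts about $r_2 = \bigl[\begin{smallmatrix}1&1\\1&-1\end{smallmatrix}\bigr]$: it is symmetric, and a direct calculation gives $r_2 r_2^T = 2 I_2$, so $r_2^{-1} = \tfrac{1}{2} r_2^T$. This is the Hadamard identity underlying the whole argument.

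Next, I would invoke Theorem~\ref{th:core} to write $r_{n+1} = r_2^{\otimes n} P$, where $P$ is the $2^n\times 2^n$ permutation matrix associated with the column permutation $\sigma$; in particular $P P^T = I_{2^n}$. Then, using $(A\otimes B)(C\otimes D) = (AC)\otimes(BD)$ and $(A\otimes B)^T = A^T\otimes B^T$, I compute
\begin{equation}
    r_{n+1} r_{n+1}^T = r_2^{\otimes n} P P^T (r_2^{\otimes n})^T = (r_2 r_2^T)^{\otimes n} = (2 I_2)^{\otimes n} = 2^n I_{2^n},
\end{equation}
from which $r_{n+1}^{-1} = \tfrac{1}{2^n} r_{n+1}^T$ follows immediately.

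There is no real obstacle here; the only subtlety is to make sure that the permutation $\sigma$ acts on columns in the way formulated in Theorem~\ref{th:core}, so that the corresponding matrix $P$ appears on the right of $r_2^{\otimes n}$ and its orthogonality $PP^T = I$ can be exploited cleanly. Once that bookkeeping is in order, the argument collapses to the base identity $r_2 r_2^T = 2I_2$ lifted through the tensor product.
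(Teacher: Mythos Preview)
Your argument is correct and follows essentially the same approach as the paper: invoke Theorem~\ref{th:core} to write $r_{n+1}$ as a column permutation of $r_2^{\otimes n}$, use the orthogonality of permutation matrices, and reduce to the Hadamard identity $r_2 r_2^T = 2I_2$ lifted through the tensor power. Your version is slightly more streamlined---you compute $r_{n+1} r_{n+1}^T = 2^n I$ directly, whereas the paper works with both a row and a column permutation and passes through $(r_2^{\otimes n})^{-1}$---but the substance is the same.
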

\begin{proof}Let $P_1$ and $P_2$ be row and column permutation operators, respectively.
Using the properties of tensor products $(r_2^{\ot n})^{-1} =(r_2^{-1})^{\ot n}$, $(r_2^{\ot n})^{T}=(r_2^{T})^{\ot n}$ and the fact that $r_2^{-1} = \frac{1}{2}r_2^T$, we get:

        \begin{multline}
         (r_2^{\ot n})^{-1} = (P_1 r_{n+1} P_2)^{-1} = P_2^{-1}r_{n+1}^{-1}P_1^{-1} = \frac{1}{2^n}(r_2^{\ot n})^T \xrightarrow{}\\ \xrightarrow{}r_{n+1}^{-1} = \frac{1}{2^n}P_2(r_2^{\ot n})^TP_1 =\xrightarrow[]{(P_1r_{n+1}P_2)^T = P_2^Tr_{n+1}^TP_1^T = (r_2^{\ot n})^T}\\ = \frac{1}{2^n} P_2P_2^Tr_{n+1}^TP_1^TP_1 =\xrightarrow[]{P_iP_i^T = I_{2^n}}= \frac{1}{2^n} r_{n+1}^T
         \end{multline}

        That was what had to be proved.
\end{proof}

\subsection{Full proof of the Theorem \ref{th:reccurent}\label{sec:fullproof}}

Let's put together all the information discussed in this article to prove Th. \ref{th:reccurent}.

According to section \ref{sec:Chapter5}, if the conditions on $U_{tail}$ discussed in section \ref{sec:CNOTSCHEMEOUR} and the conditions on $A_n$ discussed in section \ref{sec:NCOTAN} are satisfied, the matrix on the right-hand side of the factorization \eqref{eq:certralreccurentformula} is indeed diagonal. According to \ref{sec:linearityL}, this defines a linear mapping $\mathcal{L}$ between the parameters $\vec\beta$ and $\vec{\bar\alpha}$ and the parameters of the diagonal matrix $\vec\alpha$ that corresponds to them:

\begin{align}
\mathcal{L} =
\begin{cases}
\alpha_{2i-1} &= \bar\alpha_i + \beta_jr^j_{i,n} \\
\alpha_{2i} &= \bar\alpha_i - \beta_jr^j_{i,n} \\
\end{cases}
\end{align}

Or in vector form:

\begin{align}
\mathcal{L}: \vec\alpha = \vec{\bar{\alpha}}\ot[1,1] + \big(\vec\beta \cdot r_n \big) \ot [1,-1]
\end{align}

According to Section \ref{sec:NCOTAN}, this mapping becomes nonsingular and invertible provided that the matrix $r_n$ is nonsingular and invertible. According to Th.~\ref{th:core}, $r_n$ is nonsingular if and only if $A_n$ is spanned by a diagram that contains all combinations of $0$ and $1$, for example, the binary tree sequence (see Def. \ref{def:PBT}). Under these conditions, according to Th.~\ref{th:inverse}, the inverse matrix is given by the formula $r_n^{-1} = \frac{1}{2^{n-1}}r^T_{n}$, then the inverse mapping $\mathcal{L}^{-1}$ is given by the formula:

\begin{align}
\mathcal{L}^{-1} =
\begin{cases}
\bar\alpha_i &= \frac{\alpha_{2i-1}+\alpha_{2i}}{2} \\
\beta_i &= \frac{\alpha_{2i-1}+\alpha_{2i}}{2^n}r^i_{j,n} \\
\end{cases}
\end{align}

Or, in vector form, let $\vec A_1 = [\alpha_1,\alpha_3,\dots,\alpha_{2^{n-1}}]$, $\vec A_2 = [\alpha_2, \alpha_4,\dots,\alpha_{2^{n}}]$ be even and odd parameters of $D_n$, then:

\begin{align}
\mathcal{L}^{-1}: \quad \vec{\bar{\alpha}} = \frac{\vec A_1+\vec A_2}{2} \qquad
\vec{\beta} = \frac{1}{2^n}(\vec A_1-\vec A_2)r^T
\end{align}

Thus, we have found an invertible bijective mapping between the parameters of the quantum circuit and the parameters of the diagonal matrix.

This completes the proof of Th. \ref{th:reccurent}, given at the beginning of this paper.

\section{Symmetries of $A_n$\label{sec:symmetries}}

As mentioned above in the Section \ref{sec:rT}, all diagrams are in one-to-one correspondence with the $A_n$ sequences. A necessary and sufficient condition for a diagram corresponding to $A_n$ is the presence in it of all possible sequences of $\{0,1\}$ of length $n$, which means that all possible diagrams can be generated by permutation the rows and columns of the diagram in Fig. \ref{fig:visual_proof}. We can rearrange the rows of this diagram - that is, the plots of the sums $\sum_{k=1}^i\delta_{m,A_n(k)}$ - this will simply be a renumbering of tensor spaces with $\{1,2,\dots,n-1\}$, as shown in Fig. \ref{fig:diag_row_comm}. Changing a chart by permutation columns requires a deeper explanation. In Fig. \ref{fig:diag_column_comm} it can be seen that in the general case, permutation the columns can change not only the places of the 'gaps', but also their number. As we have already discussed, the gap in the $k-$th row with the coordinate $i$ corresponds to the presence of the operator $L(m)$ between $I^{\ot(n-1)}\ot D_1(\beta_i)$ and $I^{\ot(n-1)}\ot D_1(\beta_{i+1})$ in the formula \eqref{eq:certralreccurentformula}. The schemes resulting from such permutations always turn out to be non-degenerate, since the Th. \ref{th:core} is still applicable to them, since only the fact that the corresponding diagram contains all combinations of sequences of zeros and ones of length $n-1$, \eqref{eq:general_from_core}.

Let's take a closer look at two specific examples, in one of which the total number of operators $L(k)$ remains the same and an example in which their number increases. Cases in which the number of 'gaps' decreases are not possible. The explanation for this can be given as follows: at each next level (next rows), two options with $0$ and $1$ should be added for each combination. These values will be separated by at least one 'gap'. In this case, each level will have at least twice as many 'gaps' as the previous one.  This is the lower bound, and therefore, if the binary tree case satisfies it, then this is the minimal case.

\begin{figure}[htbp]
    \centering

\begin{minipage}{0.3\linewidth}
    \centering

    \begin{minipage}{0.3\linewidth}
        \centering
        \begin{tikzpicture}[
                    scale=0.5,
                    every path/.style={color=black!90, line width=0.7pt},
                    every node/.style={color=black!90}
                  ]
                  \def\Xstart{1}    
                  \def\Xspan{8}     
                  \def\Xend{\Xstart+\Xspan} 
                  \def\Ymax{10}     
                  \def\Amp{0.5}     

                  \tikzstyle{grid style}=[densely dashed, color=black!60, line width=0.5pt]

                \foreach \x [evaluate=\x as \xx using {(\x+2)/2}] in {0,...,16} {
    \draw[grid style] ({\xx}, 1.5) -- ++(0, 8);

  }
\draw[grid style, red!90] (2.5, 1.5) -- ++(0, 8);
\draw[grid style, red!90] (3, 1.5) -- ++(0, 8);
  \draw[grid style, red!90] (5, 1.5) -- ++(0, 8);
\draw[grid style, red!90] (5.5, 1.5) -- ++(0, 8);


  \def\Y{8.5} 
  \draw (\Xstart, \Y-\Amp) -- (\Xstart, \Y-\Amp) |- (\Xstart + \Xspan/2, \Y-\Amp) |- (\Xend, \Y+\Amp);

  \def\Y{6.5} 
  \draw (\Xstart, \Y+\Amp) -- (\Xstart + \Xspan/4, \Y+\Amp) |- (\Xstart + 3*\Xspan/4, \Y-\Amp) |- (\Xend, \Y+\Amp);

  \def\Y{4.5} 
  \draw (\Xstart, \Y+\Amp) -- (\Xstart + \Xspan/8, \Y+\Amp) |- (\Xstart + 3*\Xspan/8, \Y-\Amp) |- (\Xstart + 5*\Xspan/8, \Y+\Amp) |- (\Xstart + 7*\Xspan/8, \Y-\Amp)|- (\Xend, \Y+\Amp);

  \def\Y{2.5} 
  \draw (\Xstart, \Y+\Amp) -- (\Xstart + 1*\Xspan/16, \Y+\Amp)
      |- (\Xstart + 3*\Xspan/16, \Y-\Amp)
      |- (\Xstart + 5*\Xspan/16, \Y+\Amp)
      |- (\Xstart + 7*\Xspan/16, \Y-\Amp)
      |- (\Xstart + 9*\Xspan/16, \Y+\Amp)
      |- (\Xstart + 11*\Xspan/16, \Y-\Amp)
      |- (\Xstart + 13*\Xspan/16, \Y+\Amp)
      |- (\Xstart + 15*\Xspan/16, \Y-\Amp)
      |- (\Xend, \Y+\Amp);

                \end{tikzpicture}
    \end{minipage}

    \begin{minipage}{.3\linewidth}

                \begin{tikzpicture}[scale = .6]
                \node (A) at (1.0,1) { };
                \node (B) at (3.5,0) { };
                \draw[->] (A) -- (B) node[midway, above] {};
                \node (C) at (3.5,1) { };
                \node (E) at (0,0) { };
                \node (D) at (1,0) { };
                \draw[->] (C) -- (D) node[midway, above] {};
            \end{tikzpicture}

        \centering
                    \begin{tikzpicture}[
                    scale=0.5,
                    every path/.style={color=black!90, line width=0.7pt},
                    every node/.style={color=black!90}
                  ]
                  \def\Xstart{1}    
                  \def\Xspan{8}     
                  \def\Xend{\Xstart+\Xspan} 
                  \def\Ymax{10}     
                  \def\Amp{0.5}     

                  \tikzstyle{grid style}=[densely dashed, color=black!60, line width=0.5pt]

                \foreach \x [evaluate=\x as \xx using {(\x+2)/2}] in {0,...,16} {
    \draw[grid style] ({\xx}, 1.5) -- ++(0, 8);

  }

                \draw[grid style, red!90] (2.5, 1.5) -- ++(0, 8);
\draw[grid style, red!90] (3, 1.5) -- ++(0, 8);
  \draw[grid style, red!90] (5, 1.5) -- ++(0, 8);
\draw[grid style, red!90] (5.5, 1.5) -- ++(0, 8);

  \def\Y{8.5} 
  \draw (\Xstart, \Y-\Amp) -- (\Xstart + 3*\Xspan/16, \Y-\Amp) |-(\Xstart + 4*\Xspan/16, \Y+\Amp) |-(\Xstart + \Xspan/2+\Xspan/16, \Y-\Amp) |- (\Xend, \Y+\Amp);

  \def\Y{6.5} 
  \draw (\Xstart, \Y+\Amp) -- (\Xstart + \Xspan/4 -\Xspan/16, \Y+\Amp) |- (\Xstart + 8*\Xspan/16, \Y-\Amp) |- (\Xstart + 9*\Xspan/16, \Y+\Amp)|- (\Xstart + 3*\Xspan/4, \Y-\Amp) |- (\Xend, \Y+\Amp);

  \def\Y{4.5} 
  \draw (\Xstart, \Y+\Amp) -- (\Xstart + \Xspan/8, \Y+\Amp) |- (\Xstart + 3*\Xspan/16, \Y-\Amp) |-
  (\Xstart + 4*\Xspan/16, \Y+\Amp) |- (\Xstart + 6*\Xspan/16, \Y-\Amp)
  |- (\Xstart + \Xspan/2, \Y+\Amp)
  |- (\Xstart + \Xspan/2+\Xspan/16, \Y-\Amp)
  |- (\Xstart + \Xspan/2+\Xspan/8, \Y+\Amp)
  |- (\Xstart + 7*\Xspan/8, \Y-\Amp)|- (\Xend, \Y+\Amp);

  \def\Y{2.5} 
  \draw (\Xstart, \Y+\Amp) -- (\Xstart + 1*\Xspan/16, \Y+\Amp)
      |- (\Xstart + 3*\Xspan/16, \Y-\Amp)
      |- (\Xstart + 5*\Xspan/16, \Y+\Amp)
      |- (\Xstart + 7*\Xspan/16, \Y-\Amp)
      |- (\Xstart + 9*\Xspan/16, \Y+\Amp)
      |- (\Xstart + 11*\Xspan/16, \Y-\Amp)
      |- (\Xstart + 13*\Xspan/16, \Y+\Amp)
      |- (\Xstart + 15*\Xspan/16, \Y-\Amp)
      |- (\Xend, \Y+\Amp);

                \end{tikzpicture}

    \end{minipage}
\caption{Permutation the diagram columns.}
        \label{fig:diag_column_comm}


\end{minipage}
\hfill
\begin{minipage}{0.6\linewidth}
    \centering
    \begin{minipage}{0.44\linewidth}
        \centering
                    \begin{tikzpicture}[
                    scale=0.45,
                    every path/.style={color=black!90, line width=0.7pt},
                    every node/.style={color=black!90}
                  ]
                  \def\Xstart{1}    
                  \def\Xspan{8}     
                  \def\Xend{\Xstart+\Xspan} 
                  \def\Ymax{10}     
                  \def\Amp{0.5}     

                  \tikzstyle{grid style}=[densely dashed, color=black!60, line width=0.5pt]

                  \foreach \x [evaluate=\x as \xx using {(\x+2)/2}] in {0,...,16} {
    \draw[grid style] ({\xx}, 1.5) -- ++(0, 8);

  }

  \def\Y{8.5} 
  \draw (\Xstart, \Y-\Amp) -- (\Xstart + \Xspan/2, \Y-\Amp) |- (\Xend, \Y+\Amp);

  \def\Y{6.5} 
  \draw[red!90] (\Xstart, \Y+\Amp) -- (\Xstart + \Xspan/4, \Y+\Amp) |- (\Xstart + 3*\Xspan/4, \Y-\Amp) |- (\Xend, \Y+\Amp);

  \def\Y{4.5} 
  \draw (\Xstart, \Y+\Amp) -- (\Xstart + \Xspan/8, \Y+\Amp) |- (\Xstart + 3*\Xspan/8, \Y-\Amp) |- (\Xstart + 5*\Xspan/8, \Y+\Amp) |- (\Xstart + 7*\Xspan/8, \Y-\Amp)|- (\Xend, \Y+\Amp);

  \def\Y{2.5} 
  \draw[red!90] (\Xstart, \Y+\Amp) -- (\Xstart + 1*\Xspan/16, \Y+\Amp)
      |- (\Xstart + 3*\Xspan/16, \Y-\Amp)
      |- (\Xstart + 5*\Xspan/16, \Y+\Amp)
      |- (\Xstart + 7*\Xspan/16, \Y-\Amp)
      |- (\Xstart + 9*\Xspan/16, \Y+\Amp)
      |- (\Xstart + 11*\Xspan/16, \Y-\Amp)
      |- (\Xstart + 13*\Xspan/16, \Y+\Amp)
      |- (\Xstart + 15*\Xspan/16, \Y-\Amp)
      |- (\Xend, \Y+\Amp);

                \end{tikzpicture}

    \end{minipage}
\begin{minipage}{.05\linewidth}
\raisebox{-30mm}{   \begin{tikzpicture}[scale = .05]
    \node (A) at (0,16) { };
    \node (B) at (13,-24) { };
    \draw[->] (A) -- (B) node[midway, above] {};
    \node (C) at (13,16) { };
    \node (D) at (0,-24) { };
    \node (E) at (0,0) { };
    \draw[->] (D) -- (C) node[midway, above] {};

\end{tikzpicture}
}
\end{minipage}
    \hfill
    \begin{minipage}{0.44\linewidth}
        \centering
                    \begin{tikzpicture}[
                    scale=0.45,
                    every path/.style={color=black!90, line width=0.7pt},
                    every node/.style={color=black!90}
                  ]
                  \def\Xstart{1}    
                  \def\Xspan{8}     
                  \def\Xend{\Xstart+\Xspan} 
                  \def\Ymax{10}     
                  \def\Amp{0.5}     

                  \tikzstyle{grid style}=[densely dashed, color=black!60, line width=0.5pt]

                \foreach \x [evaluate=\x as \xx using {(\x+2)/2}] in {0,...,16} {
    \draw[grid style] ({\xx}, 1.5) -- ++(0, 8);

  }


  \def\Y{8.5} 
  \draw (\Xstart, \Y-\Amp) -- (\Xstart + \Xspan/2, \Y-\Amp) |- (\Xend, \Y+\Amp);

  \def\Y{2.5} 
  \draw[red!90] (\Xstart, \Y+\Amp) -- (\Xstart + \Xspan/4, \Y+\Amp) |- (\Xstart + 3*\Xspan/4, \Y-\Amp) |- (\Xend, \Y+\Amp);

  \def\Y{4.5} 
  \draw (\Xstart, \Y+\Amp) -- (\Xstart + \Xspan/8, \Y+\Amp) |- (\Xstart + 3*\Xspan/8, \Y-\Amp) |- (\Xstart + 5*\Xspan/8, \Y+\Amp) |- (\Xstart + 7*\Xspan/8, \Y-\Amp)|- (\Xend, \Y+\Amp);

  \def\Y{6.5} 
  \draw[red!90] (\Xstart, \Y+\Amp) -- (\Xstart + 1*\Xspan/16, \Y+\Amp)
      |- (\Xstart + 3*\Xspan/16, \Y-\Amp)
      |- (\Xstart + 5*\Xspan/16, \Y+\Amp)
      |- (\Xstart + 7*\Xspan/16, \Y-\Amp)
      |- (\Xstart + 9*\Xspan/16, \Y+\Amp)
      |- (\Xstart + 11*\Xspan/16, \Y-\Amp)
      |- (\Xstart + 13*\Xspan/16, \Y+\Amp)
      |- (\Xstart + 15*\Xspan/16, \Y-\Amp)
      |- (\Xend, \Y+\Amp);

                \end{tikzpicture}

    \end{minipage}
\caption{Permutation the diagram rows.}
        \label{fig:diag_row_comm}

\end{minipage}
\end{figure}

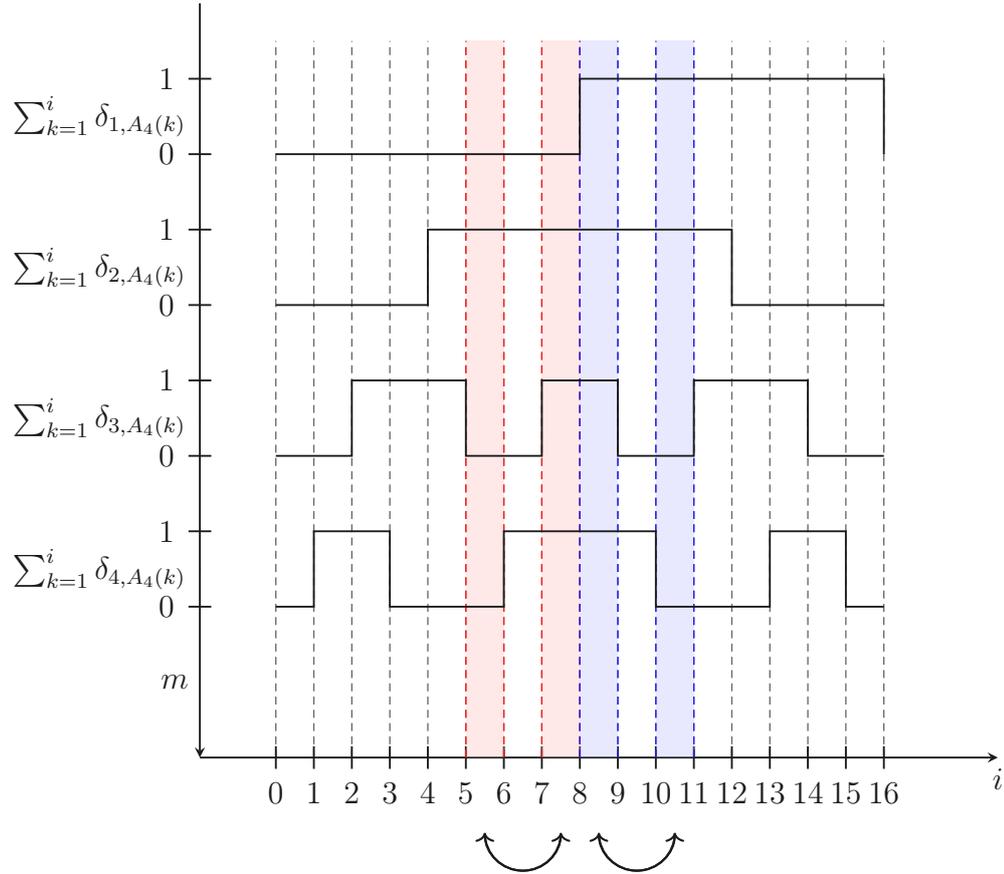
\begin{figure}

    \centering
    \begin{tikzpicture}[
    every path/.style={color=black!90, line width=0.7pt},
    every node/.style={color=black!90}
  ]
  \def\Xstart{1}    
  \def\Xspan{8}     
  \def\Xend{\Xstart+\Xspan} 
  \def\Ymax{10}     
  \def\Amp{0.5}     

  \tikzstyle{grid style}=[densely dashed, color=black!60, line width=0.5pt]

  \draw[-stealth] (0,0) -- (\Xend+1.5, 0) node[below] {$i$};
  \draw[-stealth] (0, \Ymax) -- (0, 0);

  \node[left] at (0, 1) {$m$};

  \draw[line width=0.7pt] (0, 8.5) -- (-0, 8.5) node[left=1pt] () {$\sum_{k=1}^i\delta_{1,A_4(k)}$};
  \draw[line width=0.7pt] (0.15, 9) -- (-0.15, 9) node[left=1pt] () {$1$};
  \draw[line width=0.7pt] (0.15, 8) -- (-0.15, 8) node[left=1pt] () {$0$};
  \draw[line width=0.7pt] (0, 6.5) -- (-0, 6.5) node[left=1pt] () {$\sum_{k=1}^i\delta_{2,A_4(k)}$};
  \draw[line width=0.7pt] (0.15, 7) -- (-0.15, 7) node[left=1pt] {$1$};
  \draw[line width=0.7pt] (0.15, 6) -- (-0.15, 6) node[left=1pt] {$0$};
  \draw[line width=0.7pt] (0, 4.5) -- (-0, 4.5) node[left=1pt] {$\sum_{k=1}^i\delta_{3,A_4(k)}$};
  \draw[line width=0.7pt] (0.15, 5) -- (-0.15, 5) node[left=1pt] {$1$};
  \draw[line width=0.7pt] (0.15, 4) -- (-0.15, 4) node[left=1pt] {$0$};
  \draw[line width=0.7pt] (0, 2.5) -- (-0, 2.5) node[left=1pt] {$\sum_{k=1}^i\delta_{4,A_4(k)}$};
  \draw[line width=0.7pt] (0.15, 3) -- (-0.15, 3) node[left=1pt] {$1$};
  \draw[line width=0.7pt] (0.15, 2) -- (-0.15, 2) node[left=1pt] {$0$};


  \foreach \x [evaluate=\x as \xx using {(\x+2)/2}] in {0,...,16} {
    \draw[grid style] (\xx, 0) -- ++(0, 9.5);
    \draw[line width=0.7pt] (\xx, 0.15) -- ++(0, -0.3)
      node[below=1pt] {\x};
  }

  \usetikzlibrary{patterns}
\begin{scope}
    \fill[red!30, opacity=0.3] (3.5, 0) rectangle (4, 9.5);
    \fill[red!30, opacity=0.3] (4.5, 0) rectangle (5, 9.5);

    \fill[blue!30, opacity=0.3] (6, 0) rectangle (6.5, 9.5);
    \fill[blue!30, opacity=0.3] (5, 0) rectangle (5.5, 9.5);
\end{scope}

\draw[grid style, red!90] (3.5, 0) -- ++(0, 9.5);
\draw[grid style, red!90] (4, 0) -- ++(0, 9.5);
\draw[grid style, red!90] (4.5, 0) -- ++(0, 9.5);
\draw[grid style, red!90] (5, 0) -- ++(0, 9.5);

\draw[<-, thick] (3.75,-1) arc (180:360:.5);
\draw[->, thick] (3.75,-1) arc (180:360:.5);

\draw[grid style, blue!90] (6.5, 0) -- ++(0, 9.5);
\draw[grid style, blue!90] (6, 0) -- ++(0, 9.5);
\draw[grid style, blue!90] (5, 0) -- ++(0, 9.5);
\draw[grid style, blue!90] (5.5, 0) -- ++(0, 9.5);

\draw[<-, thick] (5.25,-1) arc (180:360:.5);
\draw[->, thick] (5.25,-1) arc (180:360:.5);

  \def\Y{8.5} 
  \draw (\Xstart, \Y-\Amp) -- (\Xstart + \Xspan/2, \Y-\Amp) |- (\Xstart+ \Xspan/2, \Y+\Amp) |- (\Xend, \Y+\Amp)|- (\Xend, \Y-\Amp);

  \def\Y{6.5} 
  \draw (\Xstart, \Y-\Amp) -- (\Xstart + \Xspan/4, \Y-\Amp) |- (\Xstart + 3*\Xspan/4, \Y+\Amp) |- (\Xend, \Y-\Amp);

  \def\Y{4.5} 
  \draw (\Xstart, \Y-\Amp) -- (\Xstart + \Xspan/8, \Y-\Amp) |- (\Xstart + 5*\Xspan/16, \Y+\Amp)
  |- (\Xstart + 7*\Xspan/16, \Y-\Amp)
  |- (\Xstart + 9*\Xspan/16, \Y+\Amp)|- (\Xstart + 11*\Xspan/16, \Y-\Amp) |- (\Xstart + 7*\Xspan/8, \Y+\Amp)|- (\Xend, \Y-\Amp);

  \def\Y{2.5} 
  \draw (\Xstart, \Y-\Amp) -- (\Xstart + 1*\Xspan/16, \Y-\Amp)
      |- (\Xstart + 3*\Xspan/16, \Y+\Amp)
      |- (\Xstart + 6*\Xspan/16, \Y-\Amp)
      |- (\Xstart + 10*\Xspan/16, \Y+\Amp)
      |- (\Xstart + 13*\Xspan/16, \Y-\Amp)
      |- (\Xstart + 15*\Xspan/16, \Y+\Amp)
      |- (\Xend, \Y-\Amp);

\end{tikzpicture}
    \caption{The diagram we deformed for $n = 4$. Graphical expression through column permutation.}
    \label{fig:visual_proof_deform1}
\end{figure}

\begin{figure}

    \centering
    \begin{minipage}{0.5\textwidth}

    \begin{tikzpicture}[scale=0.75,
    every path/.style={color=black!90, line width=0.7pt},
    every node/.style={color=black!90}
  ]
  \def\Xstart{1}    
  \def\Xspan{8}     
  \def\Xend{\Xstart+\Xspan} 
  \def\Ymax{10}     
  \def\Amp{0.5}     

  \tikzstyle{grid style}=[densely dashed, color=black!60, line width=0.5pt]

  \draw[-stealth] (0,0) -- (\Xend+1.5, 0) node[below] {$i$};
  \draw[-stealth] (0, \Ymax) -- (0, 0);

  \node[left] at (0, 1) {$m$};

  \draw[line width=0.7pt] (0, 8.5) -- (-0, 8.5) node[anchor=east, inner sep=1pt] {};
  \draw[line width=0.7pt] (0.15, 9) -- (-0.15, 9) node[anchor=east, inner sep=1pt] {$1$};
  \draw[line width=0.7pt] (0.15, 8) -- (-0.15, 8) node[anchor=east, inner sep=1pt] {$0$};
  \draw[line width=0.7pt] (0, 6.5) -- (-0, 6.5) node[anchor=east, inner sep=1pt] {};
  \draw[line width=0.7pt] (0.15, 7) -- (-0.15, 7) node[anchor=east, inner sep=1pt] {$1$};
  \draw[line width=0.7pt] (0.15, 6) -- (-0.15, 6) node[anchor=east, inner sep=1pt] {$0$};
  \draw[line width=0.7pt] (0, 4.5) -- (-0, 4.5) node[anchor=east, inner sep=1pt] {};
  \draw[line width=0.7pt] (0.15, 5) -- (-0.15, 5) node[anchor=east, inner sep=1pt] {$1$};
  \draw[line width=0.7pt] (0.15, 4) -- (-0.15, 4) node[anchor=east, inner sep=1pt] {$0$};
  \draw[line width=0.7pt] (0, 2.5) -- (-0, 2.5) node[anchor=east, inner sep=1pt] {};
  \draw[line width=0.7pt] (0.15, 3) -- (-0.15, 3) node[anchor=east, inner sep=1pt] {$1$};
  \draw[line width=0.7pt] (0.15, 2) -- (-0.15, 2) node[anchor=east, inner sep=1pt] {$0$};

  \foreach \x [evaluate=\x as \xx using {(\x+2)/2}] in {0,...,16} {
    \draw[grid style] (\xx, 0) -- ++(0, 9.5);
    \draw[line width=0.7pt] (\xx, 0.15) -- ++(0, -0.3);
  }

  \def\Y{8.5} 
  \draw (\Xstart, \Y-\Amp) -- (\Xstart + \Xspan/2, \Y-\Amp) |- (\Xstart+ \Xspan/2, \Y+\Amp) |- (\Xend, \Y+\Amp)|- (\Xend, \Y-\Amp);

  \def\Y{6.5} 
  \draw (\Xstart, \Y-\Amp) -- (\Xstart + \Xspan/4, \Y-\Amp) |- (\Xstart + 3*\Xspan/4, \Y+\Amp) |- (\Xend, \Y-\Amp);
  \draw[green!90] (\Xstart+ \Xspan/4, \Y) -- (\Xstart + 3*\Xspan/4, \Y)
      -- (\Xstart + \Xspan/2, \Y+2) --(\Xstart+ \Xspan/4, \Y);
  
  \def\Y{4.5} 
  \draw (\Xstart, \Y-\Amp) -- (\Xstart + \Xspan/8, \Y-\Amp) |- (\Xstart + 5*\Xspan/16, \Y+\Amp)
  |- (\Xstart + 7*\Xspan/16, \Y-\Amp)
  |- (\Xstart + 9*\Xspan/16, \Y+\Amp)|- (\Xstart + 11*\Xspan/16, \Y-\Amp) |- (\Xstart + 7*\Xspan/8, \Y+\Amp)|- (\Xend, \Y-\Amp);

  \def\Y{2.5} 
  \draw (\Xstart, \Y-\Amp) -- (\Xstart + 1*\Xspan/16, \Y-\Amp)
      |- (\Xstart + 3*\Xspan/16, \Y+\Amp)
      |- (\Xstart + 6*\Xspan/16, \Y-\Amp)
      |- (\Xstart + 10*\Xspan/16, \Y+\Amp)
      |- (\Xstart + 13*\Xspan/16, \Y-\Amp)
      |- (\Xstart + 15*\Xspan/16, \Y+\Amp)
      |- (\Xend, \Y-\Amp);

  \draw[green!90] (\Xstart+\Xspan/16, \Y) -- (\Xstart + 2*\Xspan/16, \Y+2)
      -- (\Xstart + 3*\Xspan/16, \Y) --(\Xstart+\Xspan/16, \Y);
  \draw[green!90] (\Xstart+13*\Xspan/16, \Y) -- (\Xstart + 14*\Xspan/16, \Y+2)
      -- (\Xstart + 15*\Xspan/16, \Y) --(\Xstart+13*\Xspan/16, \Y);

  \draw[green!90] (\Xstart+\Xspan/16+\Xspan/4, 2+\Y) -- (\Xstart + 2*\Xspan/16+\Xspan/4, \Y)
      -- (\Xstart + 3*\Xspan/16+\Xspan/4, \Y+2) --(\Xstart+\Xspan/16+\Xspan/4, \Y+2);
  \draw[green!90] (\Xstart+13*\Xspan/16-+\Xspan/4, \Y+2) -- (\Xstart + 14*\Xspan/16-+\Xspan/4, \Y)
      -- (\Xstart + 15*\Xspan/16-+\Xspan/4, \Y+2) --(\Xstart+13*\Xspan/16-+\Xspan/4, \Y+2);
    \node[left] at (3.1+\Y, -\Y/3) {$(a)$};

\end{tikzpicture}

\end{minipage}%
\begin{minipage}{0.5\textwidth}
     \begin{tikzpicture}[scale=0.75,
    every path/.style={color=black!90, line width=0.7pt},
    every node/.style={color=black!90}
  ]
  \def\Xstart{1}
  \def\Xspan{8}
  \def\Xend{\Xstart+\Xspan}
  \def\Ymax{10}
  \def\Amp{0.5}

  \tikzstyle{grid style}=[densely dashed, color=black!60, line width=0.5pt]

  \draw[-stealth] (0,0) -- (\Xend+1.5, 0) node[below] {$i$};
  \draw[-stealth] (0, \Ymax) -- (0, 0);

  \node[left] at (0, 1) {$m$};

  \draw[line width=0.7pt] (0, 8.5) -- (-0, 8.5) node[anchor=east, inner sep=1pt] {};
  \draw[line width=0.7pt] (0.15, 9) -- (-0.15, 9) node[anchor=east, inner sep=1pt] {$1$};
  \draw[line width=0.7pt] (0.15, 8) -- (-0.15, 8) node[anchor=east, inner sep=1pt] {$0$};
  \draw[line width=0.7pt] (0, 6.5) -- (-0, 6.5) node[anchor=east, inner sep=1pt] {};
  \draw[line width=0.7pt] (0.15, 7) -- (-0.15, 7) node[anchor=east, inner sep=1pt] {$1$};
  \draw[line width=0.7pt] (0.15, 6) -- (-0.15, 6) node[anchor=east, inner sep=1pt] {$0$};
  \draw[line width=0.7pt] (0, 4.5) -- (-0, 4.5) node[anchor=east, inner sep=1pt] {};
  \draw[line width=0.7pt] (0.15, 5) -- (-0.15, 5) node[anchor=east, inner sep=1pt] {$1$};
  \draw[line width=0.7pt] (0.15, 4) -- (-0.15, 4) node[anchor=east, inner sep=1pt] {$0$};
  \draw[line width=0.7pt] (0, 2.5) -- (-0, 2.5) node[anchor=east, inner sep=1pt] {};
  \draw[line width=0.7pt] (0.15, 3) -- (-0.15, 3) node[anchor=east, inner sep=1pt] {$1$};
  \draw[line width=0.7pt] (0.15, 2) -- (-0.15, 2) node[anchor=east, inner sep=1pt] {$0$};

  \foreach \x [evaluate=\x as \xx using {(\x+2)/2}] in {0,...,16} {
    \draw[grid style] (\xx, 0) -- ++(0, 9.5);
    \draw[line width=0.7pt] (\xx, 0.15) -- ++(0, -0.3);
  }

  \def\Y{8.5}
  \draw (\Xstart, \Y-\Amp) -- (\Xstart + \Xspan/4, \Y-\Amp) |- (\Xstart + 3*\Xspan/4, \Y+\Amp) |- (\Xend, \Y-\Amp);
  \draw[green!90] (\Xstart+ \Xspan/4, \Y) -- (\Xstart + 3*\Xspan/4, \Y)
      -- (\Xstart + \Xspan/2, \Y-2) --(\Xstart+ \Xspan/4, \Y);
  
  \def\Y{6.5}
  \draw (\Xstart, \Y-\Amp) -- (\Xstart + \Xspan/2, \Y-\Amp) |- (\Xstart+ \Xspan/2, \Y+\Amp) |- (\Xend, \Y+\Amp)|- (\Xend, \Y-\Amp);

  \def\Y{2.5}
  \draw (\Xstart, \Y-\Amp) -- (\Xstart + \Xspan/8, \Y-\Amp) |- (\Xstart + 5*\Xspan/16, \Y+\Amp)
  |- (\Xstart + 7*\Xspan/16, \Y-\Amp)
  |- (\Xstart + 9*\Xspan/16, \Y+\Amp)|- (\Xstart + 11*\Xspan/16, \Y-\Amp) |- (\Xstart + 7*\Xspan/8, \Y+\Amp)|- (\Xend, \Y-\Amp);

  \def\Y{4.5}
  \draw (\Xstart, \Y-\Amp) -- (\Xstart + 1*\Xspan/16, \Y-\Amp)
      |- (\Xstart + 3*\Xspan/16, \Y+\Amp)
      |- (\Xstart + 6*\Xspan/16, \Y-\Amp)
      |- (\Xstart + 10*\Xspan/16, \Y+\Amp)
      |- (\Xstart + 13*\Xspan/16, \Y-\Amp)
      |- (\Xstart + 15*\Xspan/16, \Y+\Amp)
      |- (\Xend, \Y-\Amp);

  \draw[green!90] (\Xstart+\Xspan/16, \Y) -- (\Xstart + 2*\Xspan/16, \Y-2)
      -- (\Xstart + 3*\Xspan/16, \Y) --(\Xstart+\Xspan/16, \Y);
  \draw[green!90] (\Xstart+13*\Xspan/16, \Y) -- (\Xstart + 14*\Xspan/16, \Y-2)
      -- (\Xstart + 15*\Xspan/16, \Y) --(\Xstart+13*\Xspan/16, \Y);

  \draw[green!90] (\Xstart+\Xspan/16+\Xspan/4, \Y-2) -- (\Xstart + 2*\Xspan/16+\Xspan/4, \Y)
      -- (\Xstart + 3*\Xspan/16+\Xspan/4, \Y-2) --(\Xstart+\Xspan/16+\Xspan/4, \Y-2);
  \draw[green!90] (\Xstart+13*\Xspan/16-+\Xspan/4, \Y-2) -- (\Xstart + 14*\Xspan/16-+\Xspan/4, \Y)
      -- (\Xstart + 15*\Xspan/16-+\Xspan/4, \Y-2) --(\Xstart+13*\Xspan/16-+\Xspan/4, \Y-2);
    \node[left] at (1.0+\Y, -\Y/5) {$(b)$};

\end{tikzpicture}
\end{minipage}
    \caption{The same as a Fig. \ref{fig:visual_proof_deform1} diagram we deformed for $n = 4$. Graphical expression through geometry fractal intuition. As a generalization, for each triangle with a base from below, we first divide it into $4$ regions based on jumps at the edges, then put $2$ normal triangles in the extreme regions and $2$ inverted triangles in the central ones. For an inverted triangle, the opposite is true: inverted at the edges, and normal at the center.}
    \label{fig:visual_proof_deform1_triangle}
\end{figure}
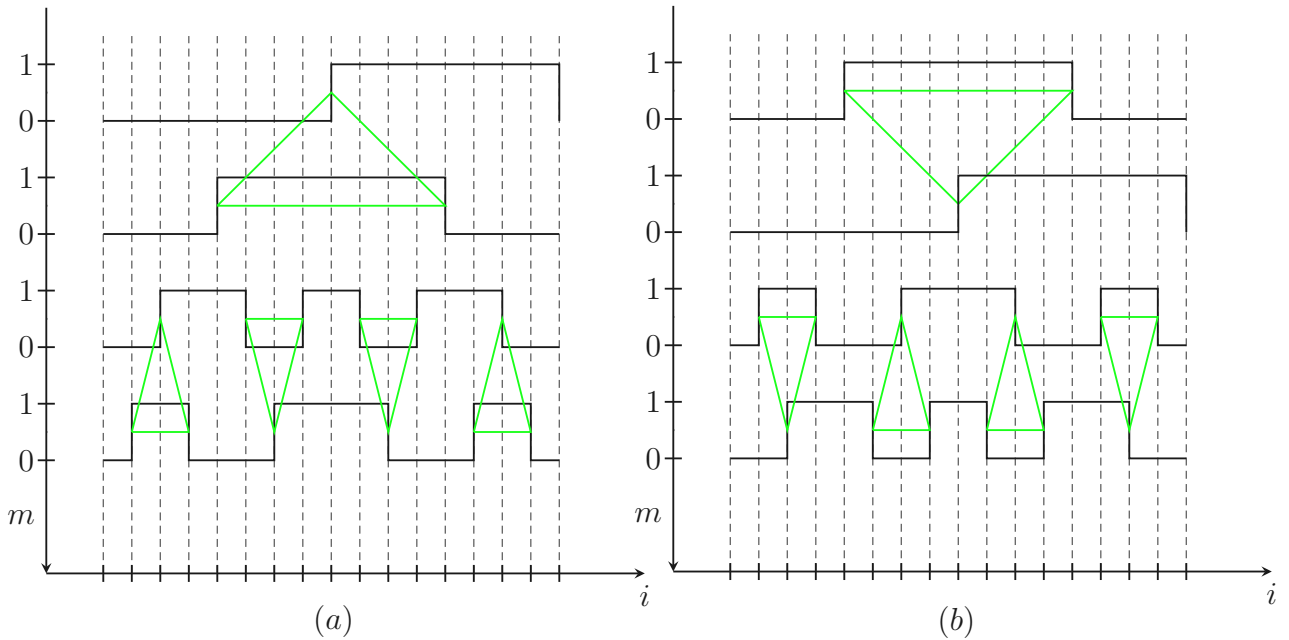

\begin{figure}
    \centering
\scalebox{0.7}{
\begin{tikzpicture}[x=1cm,y=1cm,
  treenode/.style={circle,draw=black!70,fill=white,inner sep=1pt,font=\small},
  arrowstyle/.style={-{Stealth[length=2mm,width=2mm]}, thick, draw=black!70}
]

\pgfmathsetmacro{\Xstart}{0}
\pgfmathsetmacro{\Xend}{24}
\pgfmathsetmacro{\Xspan}{\Xend - \Xstart}

\pgfmathsetmacro{\h}{1}
\pgfmathsetmacro{\Ytop}{0}
\pgfmathsetmacro{\rowsep}{1.5}

\pgfmathsetmacro{\Yone}{\Ytop}
\pgfmathsetmacro{\Ytwo}{\Ytop - \rowsep}
\pgfmathsetmacro{\Ythree}{\Ytop - 2*\rowsep}
\pgfmathsetmacro{\Ybase}{\Ythree - 2*\h}

\newcommand{\PlaceVertexAt}[4]{%
  \pgfmathsetmacro{\tempX}{#3}%
  \pgfmathsetmacro{\tempY}{#4}%
  \node[treenode] (#2) at (\tempX,\tempY) {#1};%
  \ifnum#1<7
    \pgfmathsetmacro{\Ybase}{\Ytop - 6*\rowsep}%
    \pgfmathsetmacro{\arrowlen}{\tempY - \Ybase-3*\rowsep}%
    \draw[arrowstyle] (#2.south) -- ++(0,-\arrowlen);
    \node[anchor=north] at ($(#2.south)+(0,-\arrowlen)$) {$ #1,$};
  \fi
}

\node[font=\large] at (0.125, -9.5+3*\rowsep) {$\{$};
\node[font=\large] at (\Xend-0.25, -9.5+3*\rowsep) {$\}$};

\coordinate (A1) at ({\Xstart+\Xspan/4},{\Yone-\h});
\coordinate (A2) at ({\Xstart+3*\Xspan/4},{\Yone-\h});
\coordinate (A3) at ({\Xstart+\Xspan/2},{\Yone});
\draw[green!90, thick] (A1) -- (A2) -- (A3) -- cycle;

\PlaceVertexAt{2}{N1A}{\Xstart+\Xspan/4}{\Yone-\h}
\PlaceVertexAt{2}{N1B}{\Xstart+3*\Xspan/4}{\Yone-\h}
\PlaceVertexAt{1}{N1C}{\Xstart+\Xspan/2}{\Yone}

\coordinate (B1)  at ({\Xstart+\Xspan/16},{\Ytwo-\h});
\coordinate (B2)  at ({\Xstart+3*\Xspan/16},{\Ytwo-\h});
\coordinate (B3)  at ({\Xstart+2*\Xspan/16},{\Ytwo});
\draw[green!90, thick] (B1) -- (B2) -- (B3) -- cycle;
\PlaceVertexAt{4}{N2A}{\Xstart+\Xspan/16}{\Ytwo-\h}
\PlaceVertexAt{4}{N2B}{\Xstart+3*\Xspan/16}{\Ytwo-\h}
\PlaceVertexAt{3}{N2C}{\Xstart+2*\Xspan/16}{\Ytwo}

\coordinate (B4)  at ({\Xstart+\Xspan/16+\Xspan/4},{\Ytwo});
\coordinate (B5)  at ({\Xstart+3*\Xspan/16+\Xspan/4},{\Ytwo});
\coordinate (B6)  at ({\Xstart+2*\Xspan/16+\Xspan/4},{\Ytwo-\h});
\draw[green!90, thick] (B4) -- (B5) -- (B6) -- cycle;
\PlaceVertexAt{3}{N2D}{\Xstart+\Xspan/16+\Xspan/4}{\Ytwo}
\PlaceVertexAt{3}{N2E}{\Xstart+3*\Xspan/16+\Xspan/4}{\Ytwo}
\PlaceVertexAt{4}{N2F}{\Xstart+2*\Xspan/16+\Xspan/4}{\Ytwo-\h}

\coordinate (B7)  at ({\Xstart+13*\Xspan/16},{\Ytwo-\h});
\coordinate (B8)  at ({\Xstart+15*\Xspan/16},{\Ytwo-\h});
\coordinate (B9)  at ({\Xstart+14*\Xspan/16},{\Ytwo});
\draw[green!90, thick] (B7) -- (B8) -- (B9) -- cycle;
\PlaceVertexAt{4}{N2G}{\Xstart+13*\Xspan/16}{\Ytwo-\h}
\PlaceVertexAt{4}{N2H}{\Xstart+15*\Xspan/16}{\Ytwo-\h}
\PlaceVertexAt{3}{N2I}{\Xstart+14*\Xspan/16}{\Ytwo}

\coordinate (B10) at ({\Xstart+13*\Xspan/16-\Xspan/4},{\Ytwo});
\coordinate (B11) at ({\Xstart+15*\Xspan/16-\Xspan/4},{\Ytwo});
\coordinate (B12) at ({\Xstart+14*\Xspan/16-\Xspan/4},{\Ytwo-\h});
\draw[green!90, thick] (B10) -- (B11) -- (B12) -- cycle;
\PlaceVertexAt{3}{N2J}{\Xstart+13*\Xspan/16-\Xspan/4}{\Ytwo}
\PlaceVertexAt{3}{N2K}{\Xstart+15*\Xspan/16-\Xspan/4}{\Ytwo}
\PlaceVertexAt{4}{N2L}{\Xstart+14*\Xspan/16-\Xspan/4}{\Ytwo-\h}

\pgfmathsetmacro{\unitL}{\Xspan/32}
\pgfmathsetmacro{\step}{2*\unitL}
\pgfmathsetmacro{\N}{16}
\pgfmathsetmacro{\blockW}{(2*\N - 1)*\unitL}
\pgfmathsetmacro{\xoffset}{(\Xspan - \blockW)/2}

\def\T{t}\def\B{b}

\foreach \dir [count=\idx from 0] in {b,t,t,b,t,b,b,t,t,b,b,t,b,t,t,b} {
  \pgfmathsetmacro{\xbase}{\Xstart + \xoffset + \idx * \step}
  \pgfmathsetmacro{\xmid}{\xbase + \unitL/2}
  \pgfmathsetmacro{\xend}{\xbase + \unitL}

  \ifx\dir\T
    \draw[green!90, thick] ({\xbase},{\Ythree}) -- ({\xend},{\Ythree}) -- ({\xmid},{\Ythree-\h}) -- cycle;
    \edef\nA{N3a\idx}\edef\nB{N3b\idx}\edef\nC{N3c\idx}
    \PlaceVertexAt{5}{\nA}{\xbase}{\Ythree}
    \PlaceVertexAt{5}{\nB}{\xend}{\Ythree}
    \PlaceVertexAt{6}{\nC}{\xmid}{\Ythree-\h}
  \else
    \draw[green!90, thick] ({\xbase},{\Ythree-\h}) -- ({\xmid},{\Ythree}) -- ({\xend},{\Ythree-\h}) -- cycle;
    \edef\nA{N3a\idx}\edef\nB{N3b\idx}\edef\nC{N3c\idx}
    \PlaceVertexAt{5}{\nA}{\xmid}{\Ythree}
    \PlaceVertexAt{6}{\nB}{\xbase}{\Ythree-\h}
    \PlaceVertexAt{6}{\nC}{\xend}{\Ythree-\h}
  \fi
}

\end{tikzpicture}
}

    \caption{An example of a tree that generates a sequence $A_n$ with a minimum number of $L(k)$}
    \label{fig:n6example}
\end{figure}

\subsection{Constant number of $L(k)$}

In this subsection, we will give an example of the sequence $A_n$, which is generated by rearranging the columns of the diagram from Fig. \ref{fig:visual_proof} while maintaining the number of 'gaps'.

First, let's consider the deformation of the diagram for $n=4$ and then generalize this to large $n$.
In Fig. \ref{fig:visual_proof_deform1} the columns that swap places are shown in the same colors. For convenience, the bottom is additionally duplicated using the arrows which columns we swap.

Since this permutation affects only the area of constancy of functions in the top two lines, it obviously acts identically on it. Before calculating the number of 'gaps', you can check that for this diagram in Fig. \ref{fig:visual_proof_deform1} the number of 'gaps' is the same as in the diagram \ref{fig:visual_proof}.

Now we have a diagram for $n=4$, let's generalize it to an arbitrary $n$. To do this, in Fig. \ref{fig:visual_proof_deform1_triangle}(a) it is convenient to draw triangles with vertices in the centers of 'gaps'. This gives us the first step in recompetitive circuit construction: we divide the entire diagram into 4 parts using vertical lines passing through the vertices of the upper triangle. In the bottom two lines, we draw triangles according to the following law: if the polygon falls into the extreme areas, it is drawn the same as the one above, and if it falls into the central areas, then we turn this triangle over. To continue the inverted triangle diagram, do the same, see Fig. \ref{fig:visual_proof_deform1_triangle}(b).

Following these rules, we can add two lines at each step of the recompetitive construction.

It is also worth noting how the sequence $A_n = a_n\circ\{1\}$ is generated. For such a structure, it is defined only for $n\mod2= 0$, since $2$ layers are added at once.

The first layer:
\begin{equation}
    A_2 = \{2,1,2,1\} = \{2,1,2\}\circ\{1\}
\end{equation}
On the second layer, we divide it into $4$ pieces and insert it into the two outermost $A_2+2$ and into the two central $5-A_2$:

\begin{equation}
    A_4 = \{4,3,4,2,3,4,3,1,3,4,3,2,4,3,4\}\circ\{1\}
\end{equation}

For inverted triangles, the formula changes to:
\begin{equation}
    a_{2inv} = \{1,2,1\}
\end{equation}
On the second layer, we divide it into $4$ pieces and insert it into the two outermost $a_2+2$ and into the two central $5-a_2$:

\begin{equation}
    a_{4inv} = \{3,4,3,1,4,3,4,2,4,3,4,1,3,4,3\}
\end{equation}

It is convenient to redefine $\overline{a_2} = 3-a_2 =\{2,1,2\}$. Then in general it looks like:
\begin{equation}
    \overline{a_{2^{n-1}}} = n+1-a_{2^{n-1}}\,\quad a_{n+2} = a_n\circ\{2\}\circ\overline{a_n}\circ\{1\}\circ\overline{a_n}\circ\{2\} \circ a_n\,\quad A_{n+2} = a_{n+2}\circ\{1\}
\end{equation}

A diagrammatic demonstration of this method for $n = 6$ is shown in Fig. \ref{fig:n6example}.

\subsection{Growing number of $L(k)$ }
If we now remove the condition for minimizing the number of $L(k)$ operators in the formula~\eqref{eq:U_tail_def}, this is equivalent to removing the condition for minimizing the number of 'gaps' in the diagram.
In this case, we use only the basic property of the diagram: its equivalence to the set $\{0,1\}^n$.
In this case, we obtain a non-degenerate scheme of a more general form. Now in the formula \eqref{eq:U_tail_def} for each operator $I_{2^{n-1}}\ot D_1(\beta_i)$ is not required to match only one operator $L(k)$. Let's make the necessary generalization of the formula \eqref{eq:U_tail_def} and the sequence $A_n$ for this case.

First of all, let's generalize definition of $A_n$:
\begin{definition}
    $\widetilde{A_n}$ - a sequence of $2^{n-1}$ sets. Formally:
    \begin{equation}
        \widetilde{A_n} = \{\forall i\in\{1,\dots,2^{n-1}\}\quad\widetilde{a}: \widetilde{a}\subset \{1,\dots,n-1\}\}
    \end{equation}
\end{definition}

The only difference between ${A_n}$ and $\widetilde{A_n}$ is that the first is a sequence of elements of the set $\{1,\dots,n-1\}$, and the second is a sequence of subsets of the same set $\{1,\dots,n-1\}$. Then instead of $L(A_n(i))$ in this generalization, we use $\prod_{k\in\widetilde{A_n}(i)}L(k)$.

With this definition, the formula for $U_{tail}$ is generalized as:

\begin{equation}
    U_{tail} = \prod_{i=1}^{2^{n-1}} \biggl( \Bigl(I_{2^{n-1}} \ot D_1(\beta_i)\Bigr) \cdot \prod_{k\in \widetilde{A_n}(i)} L(k) \biggr)
    \label{eq:U_tail_gen_def}
\end{equation}

In general, multiplication of all matrices from a certain set may have a different result depending on the order of multiplication. In order to understand that in our case this does not cause any problems, we will use:
\begin{prop}
    The generalization \eqref{eq:U_tail_gen_def} is correct

\end{prop}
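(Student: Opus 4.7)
The plan is to reduce the assertion to the commutativity of the $L(k)$ operators. The only possible ambiguity in \eqref{eq:U_tail_gen_def} lies in the inner factor $\prod_{k\in\widetilde{A_n}(i)}L(k)$, which is a product over an unordered finite set. Proposition~\ref{prop:L_commut} states that $[L(k),L(m)]=0$ for all admissible $k,m$, so every ordering of the elements of $\widetilde{A_n}(i)$ yields the same matrix. Hence \eqref{eq:U_tail_gen_def} is a well-defined expression, independent of the order chosen inside each $\widetilde{A_n}(i)$, which already establishes the minimal reading of ``correctness''.

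Next I would show that the preceding analysis transfers with only bookkeeping changes. Fix an arbitrary enumeration of each set $\widetilde{A_n}(i)$ and concatenate the resulting lists to form an ordinary sequence $A_n^{\mathrm{flat}}$ of length $\sum_i|\widetilde{A_n}(i)|$. Then \eqref{eq:U_tail_gen_def} is obtained from \eqref{eq:U_tail_def} applied to $A_n^{\mathrm{flat}}$ by inserting the $D_1(\beta_i)$ factors only at the breakpoints between consecutive blocks, rather than between every pair of consecutive $L$'s. The diagonality criterion of Proposition~\ref{prop:pairityX} depends solely on the total parity of $X$'s acting on each tensor factor, so it translates cleanly: for each $m\in\{1,\dots,n-1\}$, the number of indices $i$ with $m\in\widetilde{A_n}(i)$ must be even. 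Parity does not care about which $L(k)$'s are grouped into the same block.

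Finally, the linear map $\mathcal{L}$ and the invertibility argument (Theorems~\ref{th:core} and~\ref{th:inverse}) extend along the same lines. Between the $i$-th and $(i+1)$-st insertion of $D_1$, the product $\prod_{k\in\widetilde{A_n}(i)}L(k)$ acts as a single ``composite'' flip, so the associated sign column becomes the element-wise product $\bar r^i_{\cdot,n}=\prod_{k\in\widetilde{A_n}(i)}\bar r^{(k)}_{\cdot,n}$ of the elementary columns $\bar r^{(k)}_{\cdot,n}=[1,1]^{\ot k-1}\ot[1,-1]\ot[1,1]^{\ot n-1-k}$, and commutativity guarantees this composite is order-independent. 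The associated diagram now plots the partial sums $\sum_{j=1}^i\delta_{m,A_n^{\mathrm{flat}}(j)}\bmod 2$ restricted to the breakpoint positions, and the non-degeneracy criterion is unchanged: the diagram must realize every element of $\{0,1\}^{n-1}$.

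The main—and essentially the only—obstacle is keeping the bookkeeping clean when passing from the set-valued product to the sequence-valued product via an arbitrary enumeration; once Proposition~\ref{prop:L_commut} is invoked to certify enumeration-independence, every step of the earlier analysis applies verbatim to the generalized formula.
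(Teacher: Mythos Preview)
Your proposal is correct and takes essentially the same approach as the paper: the proposition concerns only the well-definedness of the product $\prod_{k\in\widetilde{A_n}(i)}L(k)$ over an unordered set, and both you and the paper resolve this by invoking Proposition~\ref{prop:L_commut} ($[L(k),L(m)]=0$) to conclude that any ordering gives the same matrix. Your subsequent paragraphs about flattening, parity, and the transfer of Theorems~\ref{th:core}--\ref{th:inverse} go well beyond what this particular proposition asserts (the paper's own proof is a single sentence), but they are accurate and anticipate material the paper treats informally in the surrounding discussion.
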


\begin{proof}
     According to Prop. \ref{prop:L_commut} $[L(k),L(m)] = 0$ we can permute the matrices in the product as we like, then we can set the uniqueness of the matrix product by multiplying them in ascending order $k$.
\end{proof}

The previously written sequences take the form:
\begin{align}
    A_2 = \{1,1\} \Rightarrow \widetilde{A_2} &= \{\{1\}, \{1\}\};\\
    A_3 = \{2,1,2,1\} \Rightarrow \widetilde{A_3} &= \{\{2\},\{1\},\{2\}, \{1\}\};\\
    A_4 = \{3,2,3,1,3,2,3,1\} \Rightarrow \widetilde{A_4} &= \{\{3\},\{2\},\{3\},\{1\},\{3\},\{2\},\{3\},\{1\}\};\\
    \dots
\end{align}

Obviously, this generalization allows us to use previously unavailable tools - diagrams with multiple jumps at the same time.
For example, a diagram constructed simply as a fractal in which each next level is the two previous ones, in this case it will look as shown in Fig. \ref{fig:visual_proof_multi_spaces}. By construction, it meets the condition of non-degeneracy, because in the area of constancy of the value in the previous row, both values are equally present in the next row.

In this case, for such a diagram, the sequence is $\widetilde{A_n}$ will be:
\begin{align}
    \widetilde{A_2} &= \{\{1\},\{1\}\}\\
    \widetilde{A_3} &= \{\{2\},\{1,2\},\{2\},\{1,2\}\}\\
    \widetilde{A_4} &= \{\{3\},\{2,3\},\{3\},\{1,2,3\},\{3\},\{2,3\},\{3\},\{1,2,3\}\}\\
    \dots
\end{align}

Then you can set this sequence recursively using an intermediate set:

\begin{align}
    \overline{A_n} &= \{\varnothing \text{ if $i$ is odd, } \widetilde{A_n}(i/2) \text{ if $i$ is even} : i\in \{1,2,\dots,2^{n}\}\}    \\
    \widetilde{A_n} &= \{\{n-1\}\cup a: \forall a \in \overline{A_{n-1}}\}
\end{align}

For example, the first few sequences then look like this:

\begin{align}
    \widetilde{A_2} &= \{\{1\},\{1\}\}\\
    \overline{A_2} &= \{\varnothing, \{1\},\varnothing, \{1\}\}\\
    \widetilde{A_3} &= \{\{2\}\cup a: \forall a \in \overline{A_2}\} = \{\{2\},\{1,2\},\{2\},\{1,2\}\} \\
    \overline{A_3} &= \{\varnothing,\{2\},\varnothing,\{1,2\},\varnothing,\{2\},\varnothing,\{1,2\}\} \\
    \widetilde{A_4} &= \{\{3\}\cup a: \forall a \in \overline{A_3}\} = \{\{3\},\{2,3\},\{3\},\{1,2,3\},\{3\},\{2,3\},\{3\},\{1,2,3\}\} \\
    \dots
\end{align}

Thus, we have given an example of such a generalized sequence, given recursively, with a number of operators $L(k)$ greater than the reference one, but yielding a non-degenerate operator. This is an example of a diagram application in which we permuted the columns to get a diagram with a larger number of 'gaps'.
\begin{figure}

    \centering
    \begin{tikzpicture}[
    every path/.style={color=black!90, line width=0.7pt},
    every node/.style={color=black!90}
  ]
  \def\Xstart{1}    
  \def\Xspan{8}     
  \def\Xend{\Xstart+\Xspan} 
  \def\Ymax{10}     
  \def\Amp{0.5}     

  \tikzstyle{grid style}=[densely dashed, color=black!60, line width=0.5pt]

  \draw[-stealth] (0,0) -- (\Xend+1.5, 0) node[below] {$i$};
  \draw[-stealth] (0, \Ymax) -- (0, 0);

  \node[left] at (0, 1) {$m$};

\draw[line width=0.7pt] (0, 8.5) -- (-0, 8.5) node[left=1pt] {$\sum_{k=1}^i\delta_{1,A_4(k)}$};
  \draw[line width=0.7pt] (0.15, 9) -- (-0.15, 9) node[left=1pt] {$1$};
  \draw[line width=0.7pt] (0.15, 8) -- (-0.15, 8) node[left=1pt] {$0$};
  \draw[line width=0.7pt] (0, 6.5) -- (-0, 6.5) node[left=1pt] {$\sum_{k=1}^i\delta_{2,A_4(k)}$};
  \draw[line width=0.7pt] (0.15, 7) -- (-0.15, 7) node[left=1pt] {$1$};
  \draw[line width=0.7pt] (0.15, 6) -- (-0.15, 6) node[left=1pt] {$0$};
  \draw[line width=0.7pt] (0, 4.5) -- (-0, 4.5) node[left=1pt] {$\sum_{k=1}^i\delta_{3,A_4(k)}$};
  \draw[line width=0.7pt] (0.15, 5) -- (-0.15, 5) node[left=1pt] {$1$};
  \draw[line width=0.7pt] (0.15, 4) -- (-0.15, 4) node[left=1pt] {$0$};
  \draw[line width=0.7pt] (0, 2.5) -- (-0, 2.5) node[left=1pt] {$\sum_{k=1}^i\delta_{4,A_4(k)}$};
  \draw[line width=0.7pt] (0.15, 3) -- (-0.15, 3) node[left=1pt] {$1$};
  \draw[line width=0.7pt] (0.15, 2) -- (-0.15, 2) node[left=1pt] {$0$};


  \foreach \x [evaluate=\x as \xx using {(\x+2)/2}] in {0,...,16} {
    \draw[grid style] (\xx, 0) -- ++(0, 9.5);
    \draw[line width=0.7pt] (\xx, 0.15) -- ++(0, -0.3)
      node[below=1pt] {\x};
  }


  \def\Y{8.5} 
  \draw (\Xstart, \Y-\Amp) -- (\Xstart + \Xspan/2, \Y-\Amp) |- (\Xstart+ \Xspan/2, \Y+\Amp) |- (\Xend, \Y+\Amp)|- (\Xend, \Y-\Amp);

  \def\Y{6.5} 
  \draw (\Xstart, \Y-\Amp) -- (\Xstart + \Xspan/4, \Y-\Amp) |- (\Xstart+ \Xspan/4, \Y+\Amp) |- (\Xstart+\Xspan/2, \Y+\Amp)|-(\Xstart + \Xspan/2, \Y-\Amp) |- (\Xstart+ 3*\Xspan/4, \Y-\Amp) |- (\Xstart+\Xspan, \Y+\Amp)|- (\Xend, \Y-\Amp);
  \def\Y{4.5} 
  \draw (\Xstart, \Y-\Amp) -- (\Xstart + \Xspan/8, \Y-\Amp) |- (\Xstart+ \Xspan/8, \Y+\Amp) |- (\Xstart+\Xspan/4, \Y+\Amp)|-(\Xstart + \Xspan/4, \Y-\Amp) |- (\Xstart+ 3*\Xspan/8, \Y-\Amp) |- (\Xstart+\Xspan/2, \Y+\Amp) |- (\Xstart + \Xspan/8+\Xspan/2, \Y-\Amp) |- (\Xstart+\Xspan/2+ \Xspan/8, \Y+\Amp) |- (\Xstart+\Xspan/2+\Xspan/4, \Y+\Amp)|-(\Xstart +\Xspan/2+ \Xspan/4, \Y-\Amp) |- (\Xstart+\Xspan/2+ 3*\Xspan/8, \Y-\Amp) |- (\Xstart+\Xspan/2+\Xspan/2, \Y+\Amp)|- (\Xend, \Y-\Amp);

  \def\Y{2.5} 
  \draw (\Xstart, \Y-\Amp) -- (\Xstart + \Xspan/16, \Y-\Amp) |- (\Xstart+ \Xspan/16, \Y+\Amp) |- (\Xstart+\Xspan/8, \Y+\Amp)|-(\Xstart + \Xspan/8, \Y-\Amp) |- (\Xstart+ 3*\Xspan/16, \Y-\Amp) |- (\Xstart+\Xspan/4, \Y+\Amp) |- (\Xstart + \Xspan/16+\Xspan/4, \Y-\Amp) |- (\Xstart+\Xspan/4+ \Xspan/16, \Y+\Amp) |- (\Xstart+\Xspan/4+\Xspan/8, \Y+\Amp)|-(\Xstart +\Xspan/4+ \Xspan/8, \Y-\Amp) |- (\Xstart+\Xspan/4+ 3*\Xspan/16, \Y-\Amp) |- (\Xstart+\Xspan/4+\Xspan/4, \Y+\Amp)|-
   (\Xstart + \Xspan/16 + \Xspan/2, \Y-\Amp) |- (\Xstart + \Xspan/2 + \Xspan/16, \Y+\Amp) |- (\Xstart + \Xspan/2 + \Xspan/8, \Y+\Amp)|-(\Xstart + \Xspan/2 + \Xspan/8, \Y-\Amp) |- (\Xstart + \Xspan/2 + 3*\Xspan/16, \Y-\Amp) |- (\Xstart + \Xspan/2 + \Xspan/4, \Y+\Amp) |- (\Xstart + \Xspan/2 + \Xspan/16 + \Xspan/4, \Y-\Amp) |- (\Xstart  + \Xspan/2 + \Xspan/4 + \Xspan/16, \Y+\Amp) |- (\Xstart + \Xspan/2 + \Xspan/4 + \Xspan/8, \Y+\Amp)|-(\Xstart +\Xspan/4 + \Xspan/2 + \Xspan/8, \Y-\Amp) |- (\Xstart + \Xspan/2 + \Xspan/4+ 3*\Xspan/16, \Y-\Amp) |- (\Xstart + \Xspan/2 + \Xspan/4 + \Xspan/4, \Y+\Amp)|-
  (\Xend, \Y-\Amp);

\end{tikzpicture}
    \caption{A diagram generated by substituting a scaled-down copy of the previous row for $n = 4$ case. The generalization is obvious.}
    \label{fig:visual_proof_multi_spaces}
\end{figure}
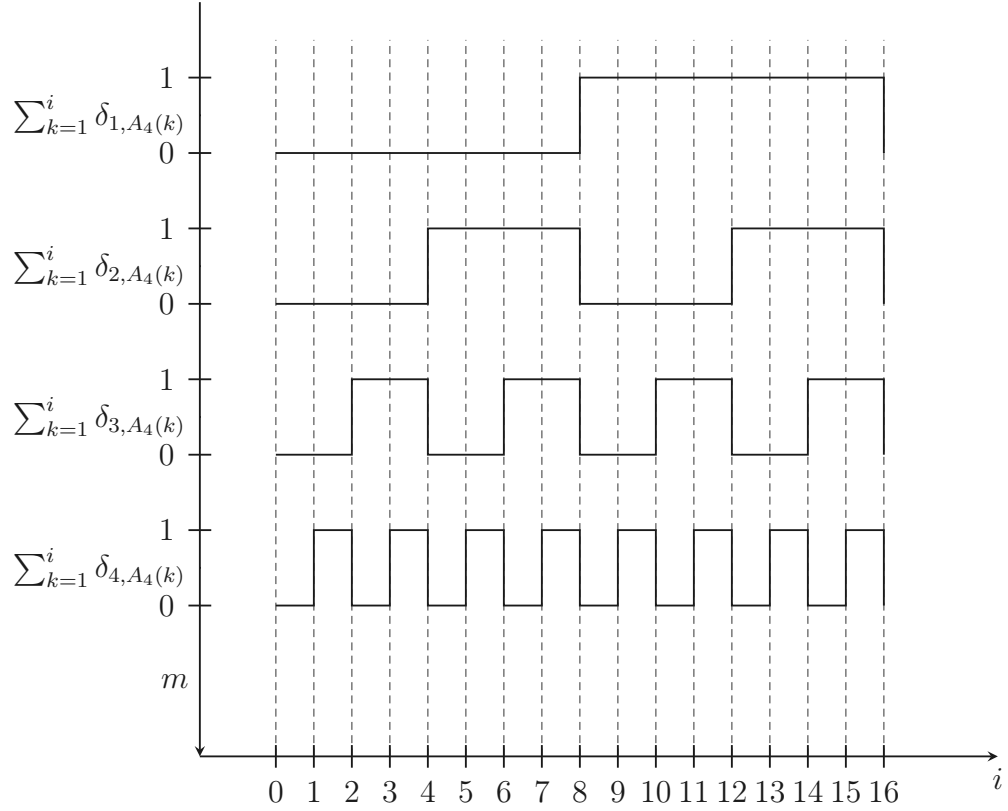
    \section{Conclusion}

We proved the decomposition of arbitrary diagonal unitary matrix $\un$  into a sequence of tensor and matrix products of diagonal matrices of a smaller size from  $\mathbf{SU}(4)$, $\mathbf{SU}(2)$ and $\mathbf{U}(1)$, proved its correctness, and constructed a linear non-degenerate mapping $\mathcal{L}$ between the parameters characterizing the original matrix and the matrices in the expansion set recursively. As we discussed in the Section \ref{sec:symmetries}, for a recursively given sequence, the number of $L(k)$ operators involved in each step is minimal and equal to the previous types of decomposition from \cite{crooks2024} and \cite{1629135}, and generalizes these approaches.

Initially, we have worked out a mathematical hypothesis that we obtained during a numerical experiment using machine learning methods \cite{FedinMorozov_unpublished}, the details of which were not considered in this paper.  We also examined all possible symmetries and obtained a convenient graphical representation that has a one-to-one correspondence between the step of the recurrent decomposition and the diagram. This is especially important in the context of quantum computing, as well as any fields that are sensitive to the order of operators, motivated by optimizing the number of operations, since our decomposition and its symmetries allow us to choose the most appropriate type of decomposition for a particular quantum algorithm in order to further optimize it. It is considered promising to generalize such diagrams in the case of decomposition of matrices from more exotic groups, as well as to generalize them to the entire scheme, since at the moment the diagram characterizes only the step of the recurrent decomposition, and not the entire decomposition at once.

\section{Acknowledgments}
The authors would like to thank Alexey Sleptsov, Dmitry Korzun, Dmitry Khudoteplov from LMTPh, Rail Khashaev from RQC and Ilya Sudakov from Yandex Research for helpful discussions. The work was supported by the state assignment of the Institute for Information Transmission Problems of RAS.

\newpage


    \newpage
    \section{Appendix}
\newcommand{\qnum}[1]{\left[#1\right]}
\newcommand{\brak}[1]{\left\langle #1 \right\rangle}
\newcommand{\A}{\mathcal{A}}
\newcommand{\SU}{\operatorname{SU}}

\subsection{$U(2^n)\Leftrightarrow U(N)$ dependence\label{sec:suNtosu2n}}

\begin{lemma}[Any matrix of $U(N)$ is a submatrix of the element $U(2^n)$]
Let $N < 2^n$. Then any unitary matrix $A_N\in U(N)$ can be embedded in the unitary matrix $A_{2^n}\in U(2^n)$ so that the set of all such nested matrices $A_{2^n}$ forms a subgroup in $U(2^n)$ isomorphic to $U(N)$.
\end{lemma}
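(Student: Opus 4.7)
The plan is to use the most obvious embedding and check that it is a group homomorphism with the claimed properties. Define
\begin{equation}
    \iota : U(N) \to U(2^n), \qquad \iota(A_N) = \begin{pmatrix} A_N & 0 \\ 0 & I_{2^n - N} \end{pmatrix},
\end{equation}
i.e. the block-diagonal extension by the identity of size $2^n - N$. The entire proof will consist of verifying that (i) the image lies in $U(2^n)$, (ii) $\iota$ is a group homomorphism, (iii) $\iota$ is injective, from which the lemma will follow immediately.

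For (i) I would compute
\begin{equation}
    \iota(A_N)^{\dagger}\iota(A_N) = \begin{pmatrix} A_N^{\dagger}A_N & 0 \\ 0 & I_{2^n-N} \end{pmatrix} = I_{2^n},
\end{equation}
using the unitarity $A_N^{\dagger}A_N = I_N$; hence $\iota(A_N) \in U(2^n)$. For (ii), the block-multiplication identity
\begin{equation}
    \iota(A_N)\,\iota(B_N) = \begin{pmatrix} A_N B_N & 0 \\ 0 & I_{2^n - N} \end{pmatrix} = \iota(A_N B_N),
\end{equation}
together with $\iota(I_N) = I_{2^n}$, shows that $\iota$ is a group homomorphism. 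For (iii), injectivity is immediate: if $\iota(A_N) = \iota(B_N)$, comparing the upper-left $N\times N$ blocks forces $A_N = B_N$, so $\ker \iota = \{I_N\}$.

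The image $\iota(U(N))$ is therefore a subgroup of $U(2^n)$ (image of a group homomorphism), and by the first isomorphism theorem $\iota(U(N)) \cong U(N)/\ker\iota = U(N)$. Since every $A_N$ sits inside its image $\iota(A_N)$ as the top-left $N\times N$ block, this proves the embedding claim and the subgroup isomorphism simultaneously.

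There is no real obstacle here: the lemma is essentially a statement about block-diagonal embeddings of unitary groups, and all three checks reduce to routine block matrix arithmetic. The only minor subtlety worth remarking in the write-up is that the embedding is not canonical (one could embed $A_N$ in other positions, or conjugate by any permutation matrix), but once a specific convention is fixed the resulting subgroup of $U(2^n)$ is isomorphic to $U(N)$ as required.
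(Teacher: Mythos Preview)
Your proof is correct and follows essentially the same approach as the paper: both use the block-diagonal embedding $A_N \mapsto \begin{pmatrix} A_N & 0 \\ 0 & I_{2^n-N} \end{pmatrix}$, verify unitarity and multiplicativity by direct block computation, and conclude that the image is a subgroup isomorphic to $U(N)$. If anything, your version is slightly more careful in explicitly invoking injectivity and the first isomorphism theorem, whereas the paper simply remarks that the remaining group properties are inherited.
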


\begin{proof}
Let's assume an arbitrary unitary matrix $A_N\in U(N)$, i.e. $A_N^\dagger A_N=I_N$. Let's build a block matrix:
\begin{equation}
    A_{2^n} = \begin{bmatrix}
        A_N & 0 \\
        0 & I_{2^n - N}
    \end{bmatrix} \in \mathbb{C}^{2^n \times 2^n}.
\end{equation}
Let's check the unitarity of $A_{2^n}$:
\begin{equation}
A_{2^n}^\dagger A_{2^n} =
\begin{bmatrix}
    A_N^\dagger & 0 \\
    0 & I
\end{bmatrix}
\begin{bmatrix}
    A_N & 0 \\
    0 & I
\end{bmatrix}
=
\begin{bmatrix}
    A_N^\dagger A_N & 0 \\
    0 & I
\end{bmatrix}
=
\begin{bmatrix}
    I_N & 0 \\
    0 & I_{2^n - N}
\end{bmatrix}
= I_{2^n},
\end{equation}

therefore, $A_{2^n} \in U(2^n)$.

Let's consider the composition of two such matrices:
\begin{equation}
    A_{2^n}B_{2^n} = 
    \begin{bmatrix}
        A_N & 0 \\
        0 & I
    \end{bmatrix}
    \begin{bmatrix}
        B_N & 0 \\
        0 & I
    \end{bmatrix}
    =
    \begin{bmatrix}
        A_NB_N & 0 \\
        0 & I
    \end{bmatrix},
\end{equation}
where $A_NB_N \in U(N)$.

Thus, the set of matrices of this type preserves the group structure: unity, reversibility, and closeness by multiplication are preserved. Therefore, the set of all such matrices $A_{2^n}$ is a subgroup in $U(2^n)$ that is isomorphic to $U(N)$.

All other group properties are inherited directly if these rules are followed.
\end{proof}

\subsection{Mutual decomposition Lemma \label{sec:j_to_rho}}

\begin{lemma}[]
\begin{equation}
    j(\rho) = \sum_{i=1}^{{n-1}}2^{n-1-i}\rho_{i} \Leftrightarrow \rho_i = \frac{j\mod 2^{n-i} - (j\mod2^{n+1-i}) }{2^{n-1-i}}
\end{equation}
\end{lemma}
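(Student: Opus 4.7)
The statement is essentially a restatement of the uniqueness of binary representation for integers in $\{0, 1, \ldots, 2^{n-1}-1\}$: the forward map $j(\rho) = \sum_{i=1}^{n-1} 2^{n-1-i}\rho_i$ expresses $j$ in base $2$ with $\rho_1$ as the most significant bit and $\rho_{n-1}$ as the least significant, and the inverse formula is the standard ``extract bit at position $n-1-i$'' expression written in terms of residues. The plan is therefore to prove the equivalence by direct substitution in both directions.

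First I would show that $j(\rho)$ is a bijection $\{0,1\}^{n-1} \to \{0,1,\ldots,2^{n-1}-1\}$. Injectivity follows by the classical argument: if two tuples $\rho \ne \rho'$ agreed under $j$, take the smallest index $i_0$ at which they differ; then the contribution $2^{n-1-i_0}$ exceeds the absolute total contribution of all strictly less significant bits $\sum_{i > i_0} 2^{n-1-i} = 2^{n-1-i_0} - 1$, contradicting equality of the sums. Surjectivity then follows by cardinality since both sides have $2^{n-1}$ elements.

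Next, to establish the inverse formula, I would use the basic identity that for any integer $j \geq 0$ and any $k \geq 0$, the residue $j \bmod 2^k$ equals $\sum_{\ell < k} 2^{\ell} b_\ell$, where $b_\ell$ is the $\ell$-th binary digit of $j$. Consequently,
\begin{equation}
(j \bmod 2^{k+1}) - (j \bmod 2^k) = 2^{k} b_k,
\end{equation}
so dividing by $2^k$ isolates the single bit at position $k$. Applying this with $k = n-1-i$ and identifying $b_{n-1-i}$ with $\rho_i$ through the decomposition of $j(\rho)$ yields precisely the claimed closed form for $\rho_i$ (up to a conventional relabeling of the two residue terms; I would verify the indexing on a small example such as $n = 4, j = 5$ before committing to the final expression).

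The final step is to check consistency: substituting the $\rho_i$ formula back into $\sum_{i=1}^{n-1} 2^{n-1-i}\rho_i$ produces a telescoping sum whose value is $j \bmod 2^{n-1}$, which equals $j$ for $j$ in the stated range. Conversely, plugging $j = j(\rho)$ into the inverse formula uses $j(\rho) \bmod 2^k = \sum_{i : n-1-i < k} 2^{n-1-i}\rho_i$, a direct consequence of the splitting of the defining sum, and recovers $\rho_i$. The main obstacle is purely bookkeeping: keeping the index correspondence between $i \in \{1,\ldots,n-1\}$ and bit position $n-1-i$ straight throughout the residue manipulations. There is no deeper content, and once the indexing is pinned down the equivalence follows immediately from the basic properties of $\bmod$.
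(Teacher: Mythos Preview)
Your proposal is correct and follows essentially the same route as the paper: the $\Leftarrow$ direction is exactly the telescoping substitution you describe, and the $\Rightarrow$ direction is the same ``subtract two residues to isolate one bit'' argument, just with slightly different bookkeeping. Your caution about the sign and index labeling of the two residue terms is well placed, since the formula as stated in the paper has precisely the kind of off-by-one ambiguity you anticipate; verifying on a small example, as you plan, is the right way to pin it down.
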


\begin{proof}
\begin{enumerate}
    \item $\Leftarrow$ Substituting $\rho$ into the formula for $j(\rho)$, we obtain:
    \begin{multline}
        \sum_{i=1}^{{n-1}}2^{n-1-i} \frac{j\mod 2^{n-i} - (j\mod2^{n+1-i}) }{2^{n-1-i}} = \\ = 
        \sum_{i=1}^{{n-1}}j\mod 2^{n-i} - (j\mod2^{n+1-i}) =j\mod 2^{n-1} = j \text{ because } j<2^{n-1}
    \end{multline}

    \item $\Rightarrow$  Since:
    \begin{align}
        j\mod2^{n+1-i} = \sum_{j = i-1}^{n-1}2^{n-1-j}\rho_j; \quad
        j\mod2^{n-i} = \sum_{j = i}^{n-1}2^{n-1-j}\rho_j \Rightarrow \\
        \Rightarrow 2^{n-1-i}\rho_i = j\mod 2^{n-i} - (j\mod2^{n+1-i}) \Rightarrow \\ \Rightarrow \rho_i = \frac{j\mod 2^{n-i} - (j\mod2^{n+1-i}) }{2^{n-1-i}}
    \end{align}

\end{enumerate}
\end{proof}

\end{document}